  \def\x{\endgroup\ExecuteOptions{dvipdfm}}%
        \def\x{\endgroup\ExecuteOptions{pdftex}}%
\newtheorem{definition}{Definition}
\newtheorem{fact}{Fact}
\newtheorem{claim}{Claim}
\newtheorem{theorem}{Theorem}
\newtheorem{observation}{Observation}
\newtheorem{lemma}{Lemma}
\author[Dominik Bojko et al.]{Dominik Bojko\affiliationmark{1}
  \and Krzysztof Grining
  \and Marek Klonowski\affiliationmark{2}}
\title[Probabilistic Counters for Privacy Preserving Data Aggregation]{Probabilistic Counters for Privacy Preserving Data Aggregation\thanks{Supported by Polish National Science Center grant number UMO-2018/29/B/ST6/02969UMO-2018/29/B/ST6/02969}}
\affiliation{
  % one line per affiliation, no postal codes, grant numbers or similar
  Wrocław University of Science and Technology, Faculty of Fundamental Problems of Technology, Department of Fundamentals of Computer Science, Wrocław, Poland\\
  Wrołcaw University of Science and Technology, Faculty of Fundamental Problems of Technology, Department of Artificial Intelligence, Wrocław, Poland}
\begin{document}

\keywords{probabilistic counter, Morris counter, differential privacy}

% This is only used if you are compiling for a volume before vol 25
% \publicationdetails{VOL}{2015}{ISS}{NUM}{SUBM}
% This is the new form of collecting the data, starting with vol 25

\maketitle

\publicationdata{vol. 28:2}{2026}{14}{10.46298/dmtcs.11614}{2023-07-20; 2023-07-20; 2026-02-12}{2026-02-12}

\begin{abstract}

Probabilistic counters are well-known tools often used for space-efficient set cardinality estimation. In this paper, we investigate probabilistic counters from the perspective of preserving privacy. We use the standard, rigid differential privacy notion. The intuition is that the probabilistic counters do not reveal too much information about individuals but provide only general information about the population. Therefore, they can be used safely without violating the privacy of individuals. However, it turned out, that providing a precise, formal analysis of the privacy parameters of probabilistic counters is surprisingly difficult and needs advanced techniques and a very careful approach.

We demonstrate that probabilistic counters can be used as a privacy protection mechanism without extra randomisation. That is, the inherent randomisation of the protocol is sufficient to protect privacy, even if the probabilistic counter is used multiple times. In particular, we present a specific privacy-preserving data aggregation protocol based on Morris Counter and MaxGeo Counter. Some of the results presented are devoted to counters that have not been investigated so far from the perspective of privacy protection. Another part is an improvement of the previous results. We show how our results can be used to perform distributed surveys and compare the properties of counter-based solutions and a standard Laplace method.

\end{abstract}

\section{Introduction}\label{sect:intro}

Since Big Data related topics have been widely developed in recent years, solutions that focus on saving memory resources have become very popular. We would like to consider a standard example of such space-efficient mechanisms, namely probabilistic counters, which are used to represent the cardinality of dynamically counted events. More precisely, we would like to indicate the occurrence of $n$ events using a very small (significantly less than $\log n$) number of bits. We assume that $n$ is unknown in advance and may change. Clearly, a simple information-theoretic argument convinces us that it is not feasible if we demand an exact representation of the number of events. Nevertheless, there are some very efficient solutions that require only $\Theta(\log\log n)$ bits and guarantee sufficient accuracy for a wide range of applications. As examples, one can point most \textit{probabilistic counters} --  probabilistic structures well known in the literature since the seminal Morris' paper~\cite{morris1978counting} followed by its thorough mathematical analysis by Flajolet in~\cite{flajolet1985approximate}. They are used as building blocks in many space-efficient algorithms in the field of data mining or distributed data aggregation in networks   or smart metering, just to mention a few (\cite{Baquero2009} or \cite{JCIGotfryd}).

In this paper, we investigate probabilistic counters from the privacy-protection perspective. Our analysis is based on a differential privacy notion, which is commonly considered the only state-of-the-art approach. The differential privacy has the undeniable advantage of being mathematically rigorous and formally provable, contrary to previous anonymity-derived privacy definitions. This approach to privacy-preserving protocols can be used to give a formal guarantee for privacy that is resilient to any form of post-processing. For a survey about differential privacy properties, see \cite{DworkAlgo} and references therein. Analysis of protocols based on differential privacy is usually technically complex, but by using this notion, we are immune to, e.g., linkage attacks (see, for example \cite{narayanan2009anonymizing,narayanan2010myths}). 

The idea behind differential privacy is as follows: for two "neighbouring" scenarios that differ only in the participation of a single user, a differentially private mechanism should provide a response chosen from very similar distributions. Roughly speaking, differential privacy is described by two parameters: $\varepsilon$ -- which controls a similarity of probabilities of common events -- and $\delta$ -- which is related to a probability of unusual events. The smaller the parameters, the better from the privacy point of view. In effect, judging by the output of the mechanism, one cannot say if a given individual (user) was taken into account for producing a given output. Intuitively, probabilistic counters should provide a high level of differential privacy since, statistically, many various numbers of events are "squeezed" into a small space of possible output results. In the case of one counter considered in our paper (MaxGeo) counter, one can find some similar, recent results about the privacy the algorithm offers. Nevertheless, the question about the value of the parameters of the potential differential privacy property remains open (see the discussion in Section~\ref{sect:related}).
Moreover, when considering a small number of events $n$, an additional problem may occur, as it may be possible to distinguish that the number of events is different from $n-1$ or $n+1$ with a significant probability.

 In our paper, we provide a very precise analysis of two well known probabilistic counters from the perspective of preserving privacy. It turned out that this task is surprisingly complex from the mathematical point of view.
Our primary motivation is to find possibly accurate privacy parameters for the two most fundamental probabilistic counter protocols, namely the Morris Counter~\cite{morris1978counting} and the MaxGeo Counter~\cite{szpankowski1990yet}. Note that the second one is used for yet another popular algorithm --- HyperLogLog~\cite{flajolet2007hyperloglog}. One may realise that these two counters are relatively old; however, they, together with their modifications, have been extensively used until these days. Morris Counter is often used in big data solutions, for instance, to measure network's capabilities \cite{ICEBuckets}. The most crucial examples of refinements of the HyperLogLog algorithm are mentioned in Section~\ref{sect:related}.

 We claim that a high-precision analysis in the case of probabilistic counters is particularly important. This is because even a mechanism with very good privacy parameters can cause a serious privacy breach when used multiple times. That is, privacy loss/information leakage accumulates over multiple releases (see, e.g., \cite{DworkAlgo}).  
 Probabilistic counters in realistic scenarios may be used as fundamental primitives and subroutines in more complex protocols, since the differential privacy property is immune to post-processing.

We also show that those two probabilistic counters can be used safely without any additional randomisation, even in very demanding settings. It is commonly known that no deterministic algorithm can provide non-trivial differential privacy. However, Probabilistic counters have inherent randomness, achieving the desired privacy parameters. In other words, one can say that probabilistic counters are safe by design, and we do not need any additional privacy-orientated methods. In particular, what is most important, existing, working implementations do not need to be changed if we start demanding the provable privacy of a system.
 
Finally, we demonstrate how our results can be used for constructing a data aggregation protocol based on probabilistic counters that can be used in some specific scenarios until we want them to satisfy even more rigorous privacy properties.

To the best of our knowledge, most of the results are new and deal with protocols not considered before in the context of privacy preservation. Some other (such as the MaxGeo counter) improve some previous results (e.g.~\cite{ADAMS}).

\subsection{Paper structure and results}
Starting from this point, for the sake of clarity, we use the abbreviation 
\textit{DP} as a shortcut for differential privacy, while this property is described by some parameters. 
 
The main contribution of our paper is as follows:

\begin{itemize}
\item We prove that the classic Morris Counter satisfies $\left(\varepsilon(n),\delta(n)\right)$-DP with \\
$\varepsilon(n)=O\left(\frac{\left(\log(n)\right)^2}{n}\right)$ and
$\delta(n)=O\left(\max\left\{n^{-\left(\ln(n)\right)^{c-1}},n^{-1}\left(\ln(n)\right)^{-c}\right\}\right)$, for any $c>0$ (Theorem~\ref{thm:addition} in Section~\ref{results}).
\item We prove that the Morris Counter satisfies the $\left(L(n),0.00033\right)$-DP property (see Definition \ref{dpDef}), where $L(n)=-\ln\left(1-{16}/{n}\right)\approx {16}/{n}$ (Theorem~\ref{thm:main} in Section~\ref{results}). In Observation~\ref{lowerObs}, we also show that the constant $16$ cannot be improved.
\item We prove that MaxGeo Counter satisfies the property $\left(\varepsilon, \delta\right)$ -DP (Definition \ref{dpDef} is provided in Section \ref{sect:model}) if the number of events $n$ (in Section \ref{sect:counters} the concept of event is clarified) is at least $\dfrac{\ln(\delta)}{\ln\left(1-2^{-l_{\varepsilon}}\right)}$, where $l_{\varepsilon} = \left\lceil \log\left(\frac{e^{\varepsilon}}{e^{\varepsilon}-1}\right) \right\rceil$ (Theorem~\ref{maxGeoTheorem} in Section~\ref{sect:counters}).
\item We construct a distributed survey protocol to preserve privacy based on probabilistic counters in Section~\ref{sect:scenario} and compare it with the Laplace method, which is considered as the actual state of the art of differentially private protocols and is not based on probabilistic counters.
\end{itemize}

The remainder of this paper is organised as follows. First,  in Section~\ref{sect:related} we mention work related to our paper and some popular examples of other probabilistic counters, which are not considered in this paper. In Section~\ref{sect:model}, we recall the differential definition of privacy. In Section~\ref{sect:counters} we 
% propose a general definition for probabilistic counter (Definition~\ref{generalCounter}) 
describe probabilistic counters; 
%, which embraces all probabilistic counters known to the authors
further, we recall the definitions of both Morris and MaxGeo Counters. Moreover, we state Fact~\ref{help}, a useful reformulation of the standard definition of differential privacy for probabilistic counters. 
In Section~\ref{sect:scenario}, we demonstrate how a probabilistic counter can be used to construct a data aggregation protocol in a very particular, yet natural scenario.
Section~\ref{results} consists of formulations of our main results for both counters.
Section~\ref{practice} gives some ideas about the realisation of the scenario and the comparison of both counters.
We also compare these solutions with the standard Laplace method (Section~\ref{practice}).
For convenience of the reader, we provide the proofs in Section~\ref{sec:proofs}, as they are rather technical. For more clarity, some proofs and lemmas are moved to \ref{append}.
Finally, in Section~\ref{sect:conclusion} we present conclusions and future work propositions.

 \section{Previous and Related Work}\label{sect:related}

In our paper, we provide a detailed analysis of some 
probabilistic counters from the perspective of differential privacy. Differential privacy concepts have been discussed in many papers in recent years. One can also find a well developed body of literature devoted to probabilistic counters and similar structures. Therefore, we limit the related literature review to the most relevant papers.    

\paragraph{Differential Privacy literature}
In our paper, we focus on the inherent privacy guarantees of some 
probabilistic structures defined as differential privacy. The idea of differential privacy has been introduced for the first time in~\cite{dwork2006calibrating}; however, its precise formulation in the widely used form appeared for the first time in~\cite{Dwork06}. There is a long list of papers concerning differential privacy, e.g.~\cite{dwork2006our,dwork2009differential,dwork2010differential}, to mention a few. 

Most of these papers focus on a centralised (global) model, namely, a database with a trusted party holding it. See that in our paper, despite the distributed setting, we have the same (non-local) trust model. 
In particular, we assume an existence of a \textit{curator} that is entitled to gather and see all participants' data in the clear and release the computed data to a wider (possibly untrusted) audience. Comprehensive information on differential privacy can be found in~\cite{DworkAlgo}.

\paragraph{Probabilistic counters and their applications}

%Comm ##3

The idea of probabilistic counters, along with the well-known Morris Counter was presented in the seminal paper~\cite{morris1978counting}. The aim was to construct a very small data structure to represent a large set of events of some kind. In our paper, we focus on the Morris Counter analysed in detail in~\cite{flajolet1985approximate}.
The second structure discussed in our paper is MaxGeo Counter, introduced and analysed in~\cite{szpankowski1990yet}. More detailed and precise analysis can be found in~\cite{eisenberg2008expectation}. The most important application of MaxGeo Counter can be found in~\cite{flajolet2007hyperloglog}, where the authors propose the well-known HyperLogLog algorithm. Its practical applications are widely described in~\cite{heule2013hyperloglog}. There are several widely used improvements of the HyperLogLog algorithm: HyperLogLog+~\cite{heule2013hyperloglog}, Streaming HyperLogLog with sketches based on historical inverse probability~\cite{StreamHIP} or martingal estimator \cite{StreamHLL} or empirically adjusted HyperBitBit (proposed by R.~Sedgewick \cite{HyperBitBit}). The main goal of these adjustments is to reduce the memory requirements (see, e.g., \cite{FastRAQ} or \cite{TingBillionsDatasets}).
For instance, some of the above solutions are used in database systems for query' optimisation or for document classification purposes.
Moreover, the MaxGeo counter was used in \cite{ANF2002}, for an adjustment of the ANF tool, developed for data mining from extensive graphs, which enables it to answer many different questions based on some neighbourhood function defined on the graph.

%Comm #3

Unsurprisingly, one of the main applications of the approximate counter is to compute the size of a database or its specific subset. A set of such applications can be found in~\cite{flajolet1985probabilistic}. In~\cite{van2009probabilistic}, the authors use Morris Counter for online, probabilistic, and space-efficient counting over streams of fixed, finite length. The authors of \cite{cichon2011approximate} proposed an application of a Morris Counter system for flash memory devices. Another application, presented in \cite{2009arXiv0904.3062C}, is a revisit of Morris Counter designed for binary floating-point numbers. In \cite{Gronemeier2009}, Morris Counter is used in a well-known problem of counting frequency moments of long data streams. The authors of~\cite{dice2013scalable} focused on making probabilistic counters scalable and accurate in concurrent settings. The paper on probabilistic counters in hardware can be found in~\cite{riley2006probabilistic}.
A slightly modified version of Morris Counter called Morris+ was recently introduced in~\cite{nelson2022optimal} with the proof of its optimality in terms of accuracy--memory trade-off.

In random graphs theory, Morris Counter is usually connected to greedy structures. For instance, in an arrangement of a randomly labelled graph in Gilbert model $G(n,p)$, it is possible to construct a greedy stable set $S_n$, which size has the same distribution as the Morris Counter $M_n$ of the base $a=(1-p)^{-1}$ (see, e.g., \cite{frieze_karonski_2015} or \cite{bollobas1998random} for the fundamentals of random graph theory).

There are many other birth processes that are quite similar to the Morris Counter, which are applicable in a variety of disciplines like biology, physics, or the theory of random graphs. Short descriptions of such examples can be found in \cite{Crippa:1997:QMP:2781893.2781980}. When talking about probabilistic counters, it is worth mentioning the Bloom filter~\cite{Bloom1970}, which are space-efficient probabilistic data structures that are representations of sets. There exists a probabilistic counter that approximates the number of elements represented by the given Bloom filter \cite{DBLP:journals/jcisd/SwamidassB07a}.

Other common examples of probabilistic counters are $F_p$ counters \cite{ALON1999137,DBLP:conf/focs/Indyk00}, which approximate the $p$-th moments of frequencies of occurrences of different elements in the database. Let us also mention a paper \cite{Kamil2} in which one can find numerous applications of similar constructions to create pseudorandom sketches in Big Data algorithms.

Notice that the variety of possible applications of probabilistic counters creates an opportunity to exploit inherent differential privacy properties. However, a new challenge arises --- to calculate the parameters of differential privacy for those counters, which are not connected straightforwardly with Morris or MaxGeo Counters.
%jak to jest we wstepie, to troche dziwnie brzmi, ze poza tymi, skoro tego jeszcze dokładnie nie przedstawiliśmy

\paragraph{Probabilistic counters and preserving privacy}

Some probabilistic counters and similar structures were previously considered in terms of privacy preservation. We mention only the papers strictly related to the algorithms discussed in our paper (i.e., Morris Counter and MaxGeo).
The authors of~\cite{desfontaines2019cardinality} show that in the scenario of using different types of probabilistic counters for set cardinality estimation with the Adversary being able to extract the intermediate values of the counter, privacy is not preserved. Note that in this paper, we perform data aggregation instead of cardinality estimation. Moreover, we assume the Adversary is not able to extract any intermediate values from the counter. That is, we consider a \textbf{global} model, while the result from \cite{desfontaines2019cardinality} assumes the settings closer to the classic local model~\cite{DworkAlgo}.

One of the main results of this submission is a careful and tight analysis of Morris Counter from the context of preserving privacy. To the best of our knowledge, such analises have not been provided so far. Our second contribution is an analogous analysis of the MaxGeo. There are a few very recent 
papers presenting privacy-preserving protocols that use the Flajolet--Martin sketch as a building block.
 In~\cite{flajolet1985probabilistic}, so called Flajolet---Martin sketch was introduced. In~\cite{ADAMS} authors consider a general concept of a probabilistic counter, based on several MaxGeo counters and its differential privacy. Nevertheless, they incorrectly call it a generalisation of the Flajolet--Martin sketch (they probably confuse the Flajolet--Martin sketch with LogLog sketch).
 
We concern a concept of MaxGeo counter, which is a core of LogLog or HyperLogLog sketch, however, it can be used in other arrangements as well. 
%Comm #4
These papers in some cases provide an analysis of the privacy guaranteed by Flajolet---Martin with the global model. In all the cases, the conclusion is positive in the sense that the protocol itself provides some level of differential privacy without adding extra randomness. 
Beneath papers provide an analysis of privacy guaranteed by the sketches related to LogLog algorithm, which can be seen as the processing of the fundamental MaxGeo counters. From this point of view, our contribution about MaxGeo counters has a larger applicative potential.
 %HyperLogLog:

 In \cite{ADAMS} the authors consider, among others, a sketch that can be seen as a particular 
application of the MaxGeo counter. They introduce its differentially private version via %mkl standard 
trick (adding artificial utilities) and provide its accuracy when used to count the number of elements in multisets. Accidentally, a proof of the basic theorem from \cite{ADAMS} uses an incorrect argument (inappropriate utilisation of Hoeffding's inequality), so it is difficult to compare the results precisely. Nevertheless, the overlap of results between our paper and \cite{ADAMS} is only partial.

In \cite{PanChoi}, the authors consider the LogLog sketch as a subroutine. After a careful analysis, they show that it is asymptotically $(\varepsilon,\delta=\mathrm{negl}(\lambda))$-DP (with respect to the numbers of different elements), when the number of elements counted by the mechanism is at least $8K\lambda \max(\frac{1}{\varepsilon},1)$, where $K$ is some accuracy parameter, $\lambda$ is some security parameter and $\mathrm{negl}(x)$ is some negligible function of argument $x$ (Theorem 4.2 in~\cite{PanChoi}). Nevertheless, the analysis does not explain how to choose parameters $K$ and $\lambda$ in order to obtain $(\varepsilon,\delta)$-DP for a given $\varepsilon$ and $\delta$ parameters. Moreover, a consideration of asymptotic behaviour (with respect to the number of unique elements $n$) is not relevant when the hash function restricts the possible result to the size bounded by its domain. Our analysis of the MaxGeo counter provides an exact (non-asymptotic) dependence between $n$ and the parameters $\varepsilon$ and $\delta$.

We also mention that some other pseudorandom structures have been analysed from the perspective of differential privacy. For example, in~\cite{AniaPio}, the authors considered Bloom filters as a means of constructing a privacy-preserving aggregation protocol.

 %\section{Differential Privacy Preliminaries}\label{sect:model}

\section{Differential Privacy Preliminaries}\label{sect:model}

In this section, we briefly recall \textit{differential privacy}. For more details, see, e.g.~\cite{DworkAlgo}. We denote the set of (positive) natural numbers by $\NN$ and the set of all integers by $\ZZ$. Moreover, let $\NN_0=\NN\cup\{0\}$. For $a,b\in\ZZ$ let us define a discrete interval $[a,b]\cap\ZZ$ by $[a:b]$.
We also define $[n]=\{1,2,\ldots,n\}$ for $n\in \NN$.
We assume that there exists a trusted \textit{curator} who holds, or securely obtains, the data of \textit{individuals} in a (possibly distributed) database $x$. 
%Let the number of rows be $N$.
Every row of $x$ consists of the data of some individual. By $\mathcal{X}$, we denote the space of all possible rows. The goal is to protect the data of every single individual, even if all users except one collude with an \textit{adversary} to breach the privacy of this single, uncorrupted user. On the other hand, the curator is responsible for producing a \textit{release} -- a possibly accurate response to a requested \textit{query}. This response is then released to the public, who is allowed to perform a statistical analysis on it. The differential privacy is, by design, resilient to post-processing attacks, so even if the adversary obtains the public release, he will not be able to infer anything about specific individuals participating in this release.

For simplicity, we interpret databases as their histograms in $\NN_0^{|\mathcal{X}|}$, so we can focus only on unique rows and the numbers of their occurrences. 

\begin{definition}[Distance between databases]
The $\ell_1$ distance between two databases $x,y\in\NN_0^{|\mathcal{X}|}$ is defined as
$$
\|x-y\|_1 = \sum_{i\in \mathcal{X}}|x_i-y_i|,
$$
where $x_i$ and $y_i$ denote the numbers of occurrences of an item (an individual) $i$ in the databases $x$ and $y$, respectively. 
% Comm #5
\end{definition}

One can easily see that $\|x-y\|_1$ measures how many records differ between $x$ and $y$. Moreover, $\|x\|_1$ measures the size of the database $x$.

A \textit{privacy mechanism} is a randomised algorithm used by the curator that takes a database as input and produces the output (the release) using randomisation.  

\begin{definition}[Differential Privacy --  from~\cite{DworkAlgo}]\label{dpDef}

A randomized algorithm $\mathcal{M}$ with domain $\mathbb{N}^{|\mathcal{X}|}$ is ($\varepsilon,\delta$)-differentially private (or ($\varepsilon,\delta$)-DP), if for all $\mathcal{S} \subseteq$ Range($\mathcal{M}$) and for all $x, y \in \mathbb{N}^{|\mathcal{X}|}$ such that $\left\Vert x-y \right\Vert_1 \leqslant 1$ the following condition is satisfied:
$$
\PR{\mathcal{M}(x) \in \mathcal{S}} \leqslant \mathrm{exp}(\varepsilon)\cdot \PR{\mathcal{M}(y) \in \mathcal{S}} + \delta,
$$
where the probability space is over the outcomes of the mechanism $\mathcal{M}$.
\end{definition}
When $\delta=0$, $\mathcal{M}$ is called ($\varepsilon$)-DP mechanism.

An intuition of $(\varepsilon,\delta)$-DP is as follows: if we choose two consecutive databases (that differ exactly on one record), the mechanism will likely return indistinguishable values. In other words, it preserves privacy with high probability, but it is admissible for a mechanism to be out of control with negligible probability~$\delta$.

\paragraph{Example 1} (Laplace noise)
In the central model, a standard and widely used mechanism with the $(\varepsilon)$-DP property is the so-called Laplace noise. A variable $X$ has Laplace distribution with parameter $\lambda$ (denoted as $X\sim\mathcal{L}(\lambda)$), if its probability density function is
$$
f(x)=\frac{1}{2\lambda}\exp\left(-\frac{|x|}{\lambda}\right)~.
$$
Note that  $\EE{X}=0$ and $\mathrm{Var}(X)=2\lambda^2$.
% Comm #6

Let $c(x)$  be the number of rows in $x$, which satisfy a given property. Note that $c$ in the differential privacy literature is usually referred to as \textit{count query}.
Imagine that an aggregating mechanism is defined as follows: $\mathcal{M}(x)=c(x)+\mathcal{L}(\varepsilon^{-1})$. Then $\mathcal{M}$ is $(\varepsilon)$-DP (for more precise properties of Laplace noise, see \cite{DworkAlgo}).

In the privacy analysis of large-scale distributed protocols, two types of  approaches are typically distinguished: event/record-level (e.g., \cite{Wang_2019, papernot2017semisupervisedknowledgetransferdeep}), where information about a specific event/record is protected, and user-level privacy (e.g., \cite{mcmahan2018learningdifferentiallyprivaterecurrent}), which protects the privacy of the users themselves. Generally, the latter is a stronger model (see the discussion in \cite{mcmahan2018learningdifferentiallyprivaterecurrent}).

Fundamentally, the model analysed in our work pertains to event-level privacy. That is, we protect information regarding whether a counter incrementation event occurred. This corresponds to a scenario in which a participant can perform a certain action only once (for example, liking a specific message on typical social media platforms). These results can be directly adapted to scenarios where a single user may be associated with multiple counter increments using standard methods. Naturally, the strength of the privacy guarantee in such a framework must depend heavily on additional assumptions (such as constraints on the number of events or the associations between users).

 \section{Probabilistic Counters --- preliminaries}
\label{sect:counters}

This paper focuses on \textit{probabilistic counters}, further denoted by $M$. The notion of a probabilistic counter is ambiguous in the literature. It is a stochastic process that can be interpreted as a mechanism defined on the space of all possible inputs, which should estimate some goal value in some sense. The exact definition of this approximation is not crucial from the DP-point of view; thus we do not consider it in this paper.

%Comm ##6
Each increase in the data source counted by the probabilistic counter is called an \textit{increment request}. Due to the randomised nature of probabilistic counters, each may change the value of the counter, but not necessarily. We will also indicate the single increment request by $\true$. For the sake of generality, we also assume that the counter can get as an input $\false$, and in such a case it simply does nothing. This is useful for real-life scenarios, e.g., data aggregation (see Section~\ref{sect:scenario}). Obviously, only increment requests impact the counter's final result; hence, we indicate the counter's value after $n$ increment requests by $M_n$, and we are not considering the number of the rest of the rows.

In Figure~\ref{counterModel}, one can see a graphical representation of the probabilistic counter. As mentioned, increment requests are indicated by $\true$ and other rows by $\false$ input. The dice represent randomness. %(namely $X(M_n)$) from Definition~\ref{generalCounter}. 
The $\mathbf{X}$-mark indicates that there is no action.

\begin{figure}[ht!]
    \centering
    \includegraphics[width=0.8\textwidth]{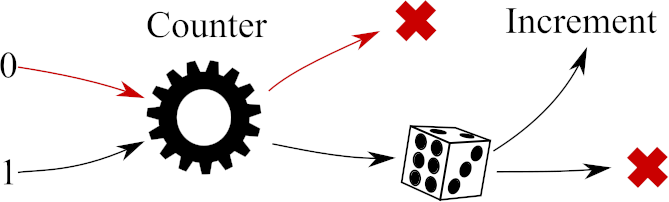}
    \caption{Graphical depiction of the probabilistic counter.}
    \label{counterModel}
\end{figure}

We emphasise that the probabilistic counter depends on the number of increment requests. We want to show that if we reveal its final value, then it does not expose any sensitive data about any single record. Moreover, note that if $x$ and $y$ differ only by one input $\false$, then $\PR{M(x) \in \mathcal{S}} = \PR{M_n \in \mathcal{S}} = \PR{M(y) \in \mathcal{S}}$, where $n$ is the number of increment requests for both $x$ and $y$. See that then the condition in Definition~\ref{dpDef} is trivially fulfilled. Hence, for our convenience, 
in this paper, we use the marking of only the number $n$ of increment requests provided by individuals when talking about the probabilistic counter $M_n$.
%Comm ##9

\begin{fact}\label{help}
Let $M$ be a probabilistic counter with a discrete $Range(M)=A$. Moreover assume that for all $n, m \gs 1$, such that $|n - m| \ls 1$, there exists such $S_n\subset A$ that for all $s\in S_n$
\begin{equation}\label{eq:counter1dp}
\PR{M_n=s}\ls \mathrm{exp}(\varepsilon)\cdot\PR{M_m=s}~
\end{equation}
and
\begin{equation}\label{eq:counterdelta}
\PR{M_n \notin S_n}\ls \delta~.
\end{equation}
Then $M$ is $(\varepsilon,\delta)$-DP.
\end{fact}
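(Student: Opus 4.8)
The plan is to unwind Definition~\ref{dpDef} and reduce it to the two pointwise hypotheses (\ref{eq:counter1dp}) and (\ref{eq:counterdelta}). First I would fix neighbouring databases $x,y$ with $\|x-y\|_1\ls 1$ and an arbitrary target set $\mathcal{S}\subseteq A$. By the remark preceding the statement, if $x$ and $y$ differ only in a $\false$ row then $M(x)$ and $M(y)$ are identically distributed and the required inequality holds trivially; hence the only case to treat is when the differing row is an incrementation request, so that $M(x)$ has the distribution of $M_n$ and $M(y)$ that of $M_m$ for some $n,m\gs 1$ with $|n-m|\ls 1$. This reduces the database-level claim to a statement purely about the counter indices $n$ and $m$.

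For such a pair $(n,m)$, let $S_n$ be the set furnished by the hypothesis. I would split the target set into the disjoint union $\mathcal{S}=(\mathcal{S}\cap S_n)\cup(\mathcal{S}\setminus S_n)$ and bound the two parts separately. On the ``good'' part, summing the pointwise bound (\ref{eq:counter1dp}) over $s\in\mathcal{S}\cap S_n$ gives
\begin{equation*}
\PR{M_n\in\mathcal{S}\cap S_n}=\sum_{s\in\mathcal{S}\cap S_n}\PR{M_n=s}\ls \exp(\varepsilon)\sum_{s\in\mathcal{S}\cap S_n}\PR{M_m=s}\ls \exp(\varepsilon)\PR{M_m\in\mathcal{S}},
\end{equation*}
where the last step only uses $\mathcal{S}\cap S_n\subseteq\mathcal{S}$ and non-negativity of the summands. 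On the ``bad'' part, monotonicity of probability together with (\ref{eq:counterdelta}) yields $\PR{M_n\in\mathcal{S}\setminus S_n}\ls\PR{M_n\notin S_n}\ls\delta$. Adding the two estimates gives $\PR{M_n\in\mathcal{S}}\ls\exp(\varepsilon)\PR{M_m\in\mathcal{S}}+\delta$, which is exactly the $(\varepsilon,\delta)$-DP inequality for this pair.

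Finally, because Definition~\ref{dpDef} quantifies over all ordered pairs of neighbours, I would observe that the hypothesis is itself symmetric in $n$ and $m$ (it is stated for all $n,m\gs 1$ with $|n-m|\ls1$): applying the argument above with the roles of $n$ and $m$ exchanged, using the corresponding set $S_m$, delivers the reverse inequality and completes the verification for every neighbouring pair.

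I do not expect a genuinely hard step here; the content is essentially a clean ``split into good and bad events'' argument. The only places that need care are the reduction step --- correctly arguing that a single-record change either leaves the incrementation count unchanged (the $\false$ case) or moves it by one (the $\true$ case), so that $|n-m|\ls1$ and the hypothesis is applicable --- and making sure the summation telescopes into $\PR{M_m\in\mathcal{S}}$ rather than into the smaller $\PR{M_m\in\mathcal{S}\cap S_n}$, which is exactly what the inclusion $\mathcal{S}\cap S_n\subseteq\mathcal{S}$ guarantees.
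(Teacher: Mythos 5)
Your proof is correct and follows essentially the same route as the paper, whose entire argument is the one-line observation that any set $B\subseteq A$ decomposes as the disjoint union of $B\cap S_n$ and its complement part, with (\ref{eq:counter1dp}) bounding the first piece and (\ref{eq:counterdelta}) the second. Your additional care about the reduction from neighbouring databases to indices with $|n-m|\ls 1$ and about symmetry matches the remark the paper places just before the statement, so there is no substantive difference in approach.
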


Note that, for our setting, Fact \ref{help} is fully compatible with the intuition of regular differential privacy (Definition~\ref{dpDef}).
Indeed, Fact \ref{help} can easily be derived from the observation that any set $B\subset A$ is a disjoint union of $B\cap S_n$ and $B\cap S_n'$.

Remark that $\varepsilon$ and $\delta$ in Fact \ref{help} can also be treated as functions of a parameter $n$, i.e. with respect to the number of increment requests. Thence, we can consider the differential privacy of the variable $M_n$, when $n$ is known, or ($\varepsilon(n),\delta(n))$-DP of the probabilistic counter $M$. The second variant lets us provide a precise dependence of privacy parameters of the counter as the number of increment requests gets large.

\subsection{Morris Counter}
\label{ssec:morris}

We begin with a short description of the Morris Counter (originally referred to as an approximate counter \cite{morris1978counting,flajolet1985approximate}). Fix $a>1$. The algorithm \ref{alg:morris} is a very simple pseudocode of the Morris Counter~\cite{morris1978counting}. 

\begin{algorithm}
	   $M \gets 1$\;
	   \While {receive request}{
		   generate $r\sim\mathrm{Uni}([0,1])$\;
	     \If {$r<a^{-M}$}{
            $M\gets M+1$\;
				}
			}
 \caption{Morris Counter Algorithm} 
 \label{alg:morris}
\end{algorithm}

Roughly speaking, we start with $M = 1$. Each incoming increment request triggers a random event. This event increments the counter ($M \leftarrow M + 1$) with probability $a^{-M}$ ($r\sim\mathrm{Uni}([0,1])$ generates a number uniformly at random from the interval $[0,1]$, using in practice some Pseudo Random Number Generator). Note that this approximate counting protocol can be easily distributed. Indeed, any entity who wants to increment the counter only has to send the request to increment it. These requests can be queued on the server and resolved one after another. A detailed description of the approximate counting method can be found in~\cite{morris1978counting,flajolet1985approximate}. Throughout this article, we examine only a standard Morris Counter  i.e., with the base $a=2$.
% \begin{fact}
%Comm #8
% \corr{B = Bernoulli? I don't think that's standard notation}{\st{Morris Counter can be represented in terms of the general counter from Definition}~\ref{generalCounter}\st{ for $M_0 = 1$, $X(M_n) \sim \mathrm{B}\left(2^{-M_n}\right)$ and $f(x,y) = x + y$.}\ok}{Morris Counter can be represented in terms of the general counter from Definition~\ref{generalCounter} for $M_0 = 1$, $X(M_n) \sim \mathrm{Ber}\left(2^{-M_n}\right)$ and $f(x,y) = x + y$, where $\mathrm{Ber}(p)$ denotes the Bernoulli distribution with probability parameter $p$.}
% \end{fact}
Morris Counter can also be defined recursively.
\begin{definition}\label{def:morris}
The Morris Counter is a Markov process $(M_n,n\in\NN_0)$ that satisfies:
\begin{align*}
&\PR{M_0=1}=1~,\\
&\PR{M_{n+1}=l|M_n=l}=1-2^{-l}~,\\
&\PR{M_{n+1}=l+1|M_n=l}=2^{-l}~,
\end{align*}
for any $l\in\NN$ and $n\in\NN_0$.
\end{definition}
Note that Definition \ref{def:morris} can be derived directly from a run of Algorithm~\ref{alg:morris}. From now on, let $\PR{M_n=l}=p_{n,l}$.
Directly from Definition \ref{def:morris} we get the following recursion:
\begin{equation}\label{eq:base}
p_{n+1,l}=(1-2^{-l})p_{n,l}+2^{-l+1}p_{n,l-1}~
\end{equation}
for $l\in\NN$ and $n\in\NN_0$ with starting and boundary conditions $p_{0,1}=1$, $p_{0,l}=0$ for $l\gs 2$ and $p_{n,0}=0$ for $n\in \NN_0$.

\paragraph{Accuracy versus Differential Privacy}

The accuracy of Morris Counter has been thoroughly analysed in various classical papers. The first detailed analysis was proposed by Ph.~Flajolet in \cite{flajolet1985approximate}.
In this part, we present the essence of theorems presented in this paper, which will be useful later on. 

First, we provide the asymptotics of the expected value and the variance of Morris Counter, with precise numerical approximations of constants:
\begin{fact}\label{fact:flajolet}
Let $M_n$ denote Morris Counter after $n$ successive increment requests. Then this random variable has an expected value $\EE{M_n} \approx \log(n) - 0.27395$ (in this paper $\log$ states for binary logarithm) and a variance $\mathrm{Var}(M_n) \approx 0.763014$.
\end{fact}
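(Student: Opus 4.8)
The plan is to first pin down the \emph{scale} of $M_n$ by computing the moments of $2^{M_n}$ exactly, and only afterwards to invoke the finer asymptotics needed for the precise additive constant and the bounded variance. For the scale I would start from Definition~\ref{def:morris} and condition on $M_n$: if $M_n=l$ then $2^{M_{n+1}}$ equals $2^{l+1}$ with probability $2^{-l}$ and $2^{l}$ otherwise, so
\begin{equation*}
\EE{2^{M_{n+1}} \mid M_n} = 2^{-M_n}\cdot 2^{M_n+1} + (1-2^{-M_n})\cdot 2^{M_n} = 2^{M_n} + 1.
\end{equation*}
Taking expectations and using $M_0=1$ yields the clean identity $\EE{2^{M_n}} = n+2$. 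The identical computation for the square gives $\EE{4^{M_{n+1}} \mid M_n} = 4^{M_n} + 3\cdot 2^{M_n}$, hence $\EE{4^{M_n}} = 4 + 3\sum_{k=0}^{n-1}(k+2)$, which is again explicit.

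These identities fix the scale, showing $2^{M_n}$ is of order $n$ and therefore $M_n$ is of order $\log(n)$, but they are \emph{not} enough to extract the statement. Indeed, combining them gives $\mathrm{Var}\bigl(2^{M_n}\bigr)=\tfrac12 n(n+1)$, so $2^{M_n}$ has constant relative error and its distribution spreads multiplicatively; one therefore cannot read the $O(1)$ variance of $M_n=\log\bigl(2^{M_n}\bigr)$ off these moments, nor the exact constant, because $\EE{M_n}=\EE{\log(2^{M_n})}\neq \log\EE{2^{M_n}}$ and the Jensen gap together with an oscillatory correction must be evaluated. Here I would follow Flajolet's route. Introducing the column generating functions $\phi_l(z)=\sum_{n\geq 0}p_{n,l}\,z^n$, the recurrence~\eqref{eq:base} becomes the first-order relation
\begin{equation*}
\phi_l(z)\bigl(1 - z(1-2^{-l})\bigr) = p_{0,l} + z\,2^{-l+1}\phi_{l-1}(z),
\end{equation*}
which telescopes in $l$ into a product; an analytically cleaner but equivalent variant is to Poissonize, studying $\tilde p_l(x)=e^{-x}\sum_{n}p_{n,l}\,x^n/n!$, which solves a linear differential system in closed form. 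Either way one arrives at a harmonic sum of the shape $\sum_{l}g(l)\,\omega(n/2^l)$ for the mean, and an analogous one for the second factorial moment of $M_n$, to which the Mellin transform applies.

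The main obstacle, and the true source of the specific decimals, is the asymptotic evaluation of these harmonic sums. The Mellin transform followed by residue collection produces a leading $\log(n)$, a genuine additive constant, and a \emph{periodic} fluctuation of period $1$ in $\log(n)$. For base $2$ this fluctuation has minuscule amplitude (of order $10^{-5}$), which is exactly why the statement records $\EE{M_n}\approx\log(n)-0.273225$ and $\mathrm{Var}(M_n)\approx 0.763177$ as approximate constants rather than exact closed forms: the figure $-0.273225$ is the mean value of the periodically modulated mean, and the boundedness of the variance reflects that $M_n-\log(n)$ converges to a limiting (periodically modulated) law instead of spreading out. I would therefore expect the real work to lie in (i) justifying the Poissonization and de-Poissonization steps rigorously, and (ii) carrying out the Mellin residue computation carefully enough to isolate the genuine constant from the negligible periodic part; the elementary moment identities of the first paragraph are by contrast immediate. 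Since the excerpt states this as a Fact summarizing~\cite{flajolet1985approximate}, I would ultimately cite that analysis for the exact numerical values while offering the martingale identities above as the self-contained justification of the $\log(n)$ scale.
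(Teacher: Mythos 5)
Your proposal is correct and matches the paper's treatment: the paper states this Fact without proof, importing it verbatim from Flajolet's analysis in \cite{flajolet1985approximate} (``we present the essence of theorems presented in this paper''), which is exactly what you ultimately propose after correctly recognizing that elementary moment identities cannot produce the constants $-0.273225$ and $0.763177$. Your supplementary computations $\EE{2^{M_n}}=n+2$ and $\mathrm{Var}\left(2^{M_n}\right)=\frac{n(n+1)}{2}$ are the same identities the paper itself derives immediately after the Fact (equations (\ref{eq:exp2}) and (\ref{eq:morvar})), and your sketch of the generating-function/Mellin route accurately locates the true source of the decimal constants and the bounded, periodically modulated variance.
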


%Comm #9
Realize that Fact \ref{fact:flajolet} guarantees high concentration of $M_n$ around its average --- a characteristic desirable in order to satisfy differential privacy definition.
%Comm 11
Fact \ref{fact:flajolet} also justifies a definition of moving discrete intervals:
\begin{equation}
\label{eq:In}
I_n=[\left\lceil\log(n)\right\rceil-4:\left\lceil\log(n)\right\rceil+4]\cap [n+1]~,
\end{equation}
which will emerge as a crucial point of our further considerations of this Markov process in terms of differential privacy in this section. Let us mention that usually $[\left\lceil\log(n)\right\rceil-4:\left\lceil\log(n)\right\rceil+4]\subseteq [n+1]$, so one may think that $I_n$ are symmetric discrete intervals of length $8$, centred at $\lceil\log(n)\rceil$.

The lion's share of applications of Morris Counter is based on counting a number of occurrences, that is, the number of increment requests. In order to estimate this value, we may use ~(\ref{eq:base}) and simply obtain
$\EE{2^{M_{n+1}}}=\EE{2^{M_n}}+1$, so together with the assumption $M_0=1$ we obtain the following.
\begin{equation}
\label{eq:exp2}
\EE{2^{M_n}}=n+2~.
\end{equation}
Hence $2^{M_n}-2$ is an unbiased estimator of the number of increments $n$.
Remark that $n$ can be saved in $\lceil\log(n)\rceil$ bits. On the other hand, Fact \ref{fact:flajolet} shows that on average, $\log(\log(n))+O(1)$ bits are required to store $M_n$. As announced earlier, this is the crucial advantage of Morris Counter.
Moreover, analogously to (\ref{eq:exp2}) we may obtain 
%Comm #10 and ##12
\begin{equation}
\label{eq:morvar}
\mathrm{Var}(2^{M_n}-2)=\frac{n(n+1)}{2}~.
\end{equation}
Formulas (\ref{eq:exp2}) and (\ref{eq:morvar}) will be used in the example of data aggregation analysis in Section~\ref{sect:scenario}.

\subsection{MaxGeo Counter}

We begin with a short description of MaxGeo Counter. Algorithm \ref{alg:maxgeo} shows its pseudocode. Informally, for each increment request, the server has to generate a random variable from the geometric distribution $\mathrm{Geo}({1}/{2})$ (ranged in $\NN$). The final result is the maximum taken over all these random variables generated.

\begin{algorithm}
	   $C \gets 1$\;
	   \While {receive request}{
		   generate $r\sim\mathrm{Geo}({1}/{2})$\;
		   $C \gets \max\{C,r\}$\;
		   }
        return $C$\;
 \caption{MaxGeo Counter Algorithm} 
 \label{alg:maxgeo}
\end{algorithm}
% \begin{fact}
% MaxGeo Counter can be represented by the general counter from Definition~\ref{generalCounter} for $M_0 = 1$, $X(M_n) \sim \mathrm{Geo}({1}/{2})$ and $f(x,y) = \max(x, y)$.
% \end{fact}

The expectation and variance of the maximum of $n$ i.i.d. geometric variables have already been analysed in the literature. For instance, Szpankowski and Rego \cite{szpankowski1990yet} provided exact formulas for the' expected value and variance of such variables. However, they are impractical for large applications $n$. Hence, they also provided asymptotics (here, for a maximum of $n$ independent $\mathrm{Geo}(1/2)$ distributions):
$\EE{M_n}= \log(n)+ O(1)$ and 
% $\mathrm{Var}(M_n)\approx \frac{\pi^2}{6\ln(2)^2}+\frac{1}{12}=3.507048\ldots$ 
$\mathrm{Var}(M_n)= \log(n)+ O(1)$ and thus, similarly to the Morris Counter, there are only $\log(\log(n)) +O(1)$ bits required on average to save the MaxGeo Counter after $n$ increment requests.

%Comm #11 and ##14
% \corr{This whole paragraph states something
% false. Flajolet-Martin do not use MaxGeo counters. None of the reported results, including Eq. (7) applies.}{
% However, for the first time,}{\no} the MaxGeo Counter was used \corr{Used in an aggregating algorithm?}{as}{???} an aggregating algorithm by Flajolet and Martin \cite{flajolet1985probabilistic}. They have provided that $\EE{2^{M_n}}\approx \varphi n$, where the ``magic Flajolet---Martin constant'' (the name according to \cite{toth2020transcendental}) is defined as follows:
% \begin{equation}
% \label{eq:phi}
% \varphi=\frac{\exp(\gamma)}{\sqrt{2}}\cdot \frac{2}{3}\prod\limits_{n=1}^{\infty}\left(\frac{(4n+1)(4n+2)}{4n(4n+3)}\right)^{\epsilon_n}= 0.77351\ldots~,
% \end{equation}
% where $\gamma=0,57721\ldots$ is Euler---Mascheroni constant and $\epsilon_n$ is $\{-1,1\}$-Morse---Thue sequence ($\epsilon_n=(-1)^{\nu(n)}$, where $\nu(n)$ is the number of occurrences of digit $\true$ in the binary representation of number $n\in\NN_0$).

%Comm ##15

\subsection{General Probabilistic Counting with Stochastic Averaging}

Here we recall briefly a General Probabilistic Counting with Stochastic Averaging algorithm, based on the original idea from \cite{flajolet1985probabilistic}.
% Let $m$ be of the form $2^k$ for some $k\in\NN$. 
Assume that there are $m$, initially empty lots related to some independent copies of some probabilistic counter. For each increment request, we connect it to one of the groups uniformly at random. 
Finally, we perform incrementation requests  separately and independently for each lot, obtaining the following. %MaxGeo Counters 
$M[1],M[2],\ldots,M[m]$.
% Finally, we perform Algorithm \ref{alg:maxgeo} separately and independently for each lot, obtaining MaxGeo Counters $M[1],M[2],\ldots,M[m]$.

Without delving into details, for the original PCSA algorithm, $\EE{M_n}\approx\log(\varphi n)$, where $M_n$ is a value of a specific probabilistic counter connected with PCSA after $n$ increment requests and $\varphi$ is some magic constant.
If we denote the mean of these counters $m$ after the total number of increment requests $n$ by $\sigma_n(m)$, then we may introduce the statistic:
$$
\Xi_n(m)=\left\lfloor\frac{m}{\varphi}2^{\sigma_n(m)}\right\rfloor~.
$$
Then (according to \cite{flajolet1985probabilistic}), for any $m=2^k$, $k\in\NN$,
% \begin{equation}
% \EE{\Xi_n(m)}= n\left(1+\frac{0.31}{m}+\psi_1(m,n)+o(1)\right)\text{, with } |\psi_1(m,n)|\ls 10^{-5}
% \end{equation}
% and
% \begin{equation}
% \mathrm{Var}(\Xi_n(m))= n^2 \left(\frac{0.61}{m}+\psi_2(m,n)+o(1)\right)\text{, with } |\psi_2(m,n)|\ls 10^{-5}~.
% \end{equation}
% \begin{equation}
$
\EE{\Xi_n(m)}\approx n\left(1+\frac{0.31}{m}\right)
% \end{equation}
$
and
% \begin{equation}
$
\mathrm{Var}(\Xi_n(m))= n^2 \left(\frac{0.61}{m}\right).
% \end{equation}
$

Note that averaging reduces the variance of the probabilistic counter.
Remark that ''Stochastic Averaging'' in PCSA algorithm refers to the random choice of the number of entities in each group, and it slightly differs from the standard averaging solution via the Monte Carlo method with groups of equal size.

An important conclusion is that we may apply the idea of original PCSA in general to any probabilistic counters.

% It is worth to mention, that if some probabilistic counters $M$, which satisfy $\EE{M_n}\approx c\log(n)$, for some constant $c$, there are results, which somehow estimates $n$, using statistics of the form $Cm2^$

\subsection{LogLog counter}

In~\cite{durand2003loglog} a LogLog algorithm was proposed.
It is based on $m=2^k$ counters $(M[j])_{j=1}^{m}$, where $k>0$. 
We may interpret this algorithm in the context of probabilistic counters. In such a scenario, it takes a hashed value (binary sequence) as input on every increment request.
The first $k$ bits of the hash determine which of the $m$ counters should be incremented (the index $j$ is chosen as the decimal representation of the sequence restricted to these first $k$ bits; hence it translates the increment request of the LogLog counter to the increment request of one of the $m$ internal counters).
Consider the first non-zero bit of a tail of the sequence (starting from $(k+1)$-th bit). Its position $R$ in this tail follows the $\mathrm{Geo}\left(\frac{1}{2}\right)$ distribution assuming the uniform distribution of the input sequences.
If $R>M[j]$, then $M[j]$ should become $R$. Otherwise, it does not change.

Therefore, LogLog counter is, in fact, a general PCSA that uses $m$ MaxGeo counters. It can be used to estimate the cardinality of increment requests $n$ using the following estimator: 
$$
\mathrm{LogLog}_n^{(m)}=\alpha_m m2^{\frac{1}{m}\sum\limits_{j=1}^{m}M_n[j]}~,
$$
with the scaling constant given by the formula
$$
\alpha_m=\left(\Gamma\left(-\frac{1}{m}\right)\frac{2^{-1/m} -1}{\ln(2)}\right)^{-m}~,
$$
where $\Gamma$ is Euler's Gamma function.

It is worth mentioning that $\alpha_m$ is an increasing sequence and $\alpha_m\approx 0.79$ for $m\gs~64$ (e.g. $\alpha_8\approx 0.69763\ldots$, $\alpha_{64}=0.78356\ldots$ and the limit is $\alpha_{\infty}=0.79402\ldots$).

We are interested in the expectation and accuracy of the cardinality estimator, which can be briefly described as follows:
$\EE{\mathrm{LogLog}_n^{(m)}}\approx n$ and $\mathrm{Var}(\mathrm{LogLog}_n^{(m)})\approx \frac{1.69 n^2}{m}$ (see \cite{durand2003loglog} for more details). However, in order to control the LogLog counter, on average about $m\log\log\left(\frac{n}{m}\right)$ bits of memory are needed.

An interesting fact is that the expectation of a single MaxGeo counter is logarithmic and the estimator of $n$ is of the form $Cm2^{\sigma_n(m)}$, for some constant $C$ (just as in the case of the original PCSA).

Since LogLog counter is an effect of processing of some MaxGeo counters, therefore its differential privacy is based on the same property of the auxiliary counters.

\subsection{HyperLogLog}

The maximum of geometric variables is also used as a primitive in the well known HyperLogLog algorithm (see~\cite{flajolet2007hyperloglog}). Therefore, its privacy properties are important both from the theoretical and practical point of view. Essentially, in HyperLogLog we perform the general PCSA algorithm, but the final estimation is somehow different:
$$
\mathrm{HyperLogLog}_n^{(m)} := \alpha_m m^2 \left(\sum_{j=1}^m 2^{-M_n[j]}\right)^{-1},
$$
where $\alpha_k$ is a constant dependent only on $k$ (see \cite{flajolet2007hyperloglog} for more details). It should be noted that HyperLogLog related algorithms (mentioned in Section \ref{sect:intro}) are the best-known procedures designated for cardinality estimation, and are close to optimum \cite{Indyk2003Woodruff}.
According to \cite{flajolet2007hyperloglog}, for $m=2^k$, where $k\gs 4$,
$$
\EE{\mathrm{HyperLogLog}_n^{(m)}}=n(1+\psi_3(n)+o(1)),\text{ with } |\psi_3(n)|<5\cdot 10^{-5}
$$
and
$$
\mathrm{Var(HyperLogLog)_n^{(m)}}=n^2\left(\frac{\beta_m}{\sqrt{m}}+\psi_4(n)+o(1)\right)^2, \text{ with } |\psi_4(n)|<5\cdot 10^{-4}~,
$$
where $\beta_m\stackrel{m\rightarrow\infty}{\longrightarrow} \sqrt{2\log(2)-1}=1.03896\ldots$ and $\beta_m\ls 1.106$ for $m\gs 16$.

 \section{Privacy-Preserving Survey via Probabilistic Counters}\label{sect:scenario}

%Comm ##1

In this section, we present an example scenario for data aggregation using probabilistic counters. We assume that there is a \textit{server} (alternatively, we call it \textit{aggregator}) and a collection of \textit{nodes} (e.g., mobile phone users), and we want to perform a boolean survey with a sensitive question. That is, each user sends $\false$ if his answer is \textit{no} and $\true$ if the answer is \textit{yes}. We assume that the connections between users and the server are perfectly secure, and the data can safely get to the trusted server. This can be performed using standard cryptographic solutions. The server's goal is to publish the sum of all responses $\true$ in a way that preserves privacy. This goal could obviously be achieved by simply collecting all the data and adding an appropriately calibrated Laplace noise (see~\cite{DworkAlgo}). However, we aim to show that probabilistic counters have inherently sufficient randomness to be differentially private without any auxiliary randomising mechanism.

We can present the general scenario in the following way:
\begin{enumerate}
 \item each user sends his/her bit of data to the server using secure channels,
 \item server plugs the data points sequentially into the counter,
 \item if the data point is $\true$, the counter receives \textit{increment request}, otherwise, the data is ignored,
 \item each increment request is being processed by the counter and may lead (depending on randomness) to an increase of the value of the counter,
 \item when all data are processed, the value of the counter is \textit{released} to the public.
\end{enumerate}
Note that we assume that the Adversary has access \textbf{only} to the released value. We also released only the counter value itself, which does not estimate the responses of $\true$. This estimation is a function of the released value, which is different for Morris or MaxGeo Counter. There can also be various ways to estimate the actual number using a counter value. However, this does not matter for our case, as differential privacy is conveniently fully resilient to post-processing (see~\cite{DworkAlgo}). The graphical representation of our scenario considered is presented in Figure~\ref{securityModel}.

\begin{figure}[ht!]
    \centering
    \includegraphics[width=0.8\textwidth]{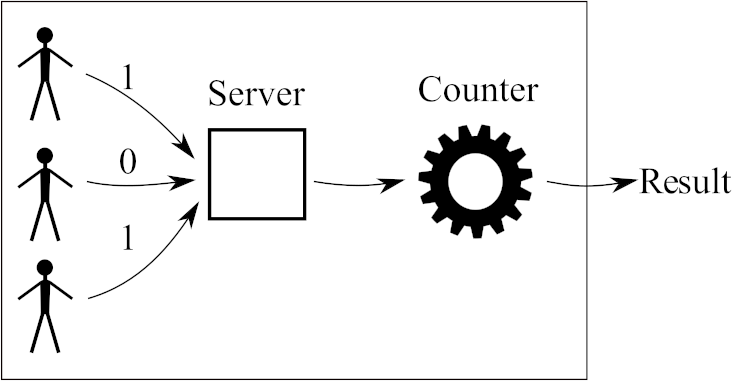}
    \caption{Scenario for data aggregation using probabilistic counters. We assume that the Adversary does not have any way to extract information from within the rectangle.}
    \label{securityModel}
\end{figure}

\paragraph{Adversary}
Our assumptions about the Adversary are the same as in most differential privacy papers. Namely, he may collude with any subset of the participants (e.g., all except the single user whose privacy he wants to breach). On the other hand, the aggregator is trusted. See that even though we have a distributed system in mind, this is, in fact, a central differential privacy scenario. We do not assume pan-privacy. This means that the algorithm's internal state is \textbf{not} subject to the constraints of differential privacy. Obviously, if the Adversary knew the internal state of the counter at any time or could observe whether, after receiving data from a specific user, the server had to perform computations to potentially increment the counter (implying a $\true$ response) or not, he would easily violate the privacy. We also do not assume privacy under continual observation. The survey is not published iteratively but once only after it is finished. In short, the Adversary \textbf{cannot} 
\begin{itemize}
 \item extract or tamper with the internal state of the counter,
 \item extracts any information from the server or channels between users and the server.
\end{itemize}
The Adversary \textbf{can}
\begin{itemize}
 \item collude with any subset $C$ of the participants (e.g., know their data or send them all $\false$ to the server) in order to breach the privacy of the user not belonging to $C$,
 \item obtain the final result of the aggregation and perform any desired post-processing on it.
\end{itemize}

 \section{Formulation of main results}
\label{results}

In this section, we state main results. The proofs are postponed to Section~\ref{sec:proofs} for the convenience of the reader.

\subsection{Morris Counter Privacy}
\label{morris}

In this subsection, we investigate the Morris Counter in terms of $(\varepsilon, \delta)$-DP. Here we present the main result:
\begin{theorem}\label{thm:main}
Let $M$ denote the Morris Counter and assume $|n-m|\ls 1$. Then 
$$
\PR{M_n=l} \leqslant \left(1-\frac{16}{n}\right)^{-1}\cdot\PR{M_m=l} + \delta,
$$
where $\delta<0.00033$,
so $M$ is $\left(L(n),0.00033\right)$-DP with
$$
L(n)=-\ln\left(1-\frac{16}{n}\right)=\frac{16}{n}+\frac{128}{n^2}+O(n^{-3})\ls \frac{16}{n-8}~.
$$
\end{theorem}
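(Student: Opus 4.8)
The plan is to deduce the theorem from Fact~\ref{help} by taking the ``good'' set to be the moving interval $S_n = I_n$ of (\ref{eq:In}). Under this choice two separate estimates must be established: a pointwise ratio bound $\PR{M_n = l} \ls (1-16/n)^{-1}\PR{M_m = l}$ valid for every $l \in I_n$ and every $m$ with $|n-m| \ls 1$, and a tail bound $\PR{M_n \notin I_n} \ls \delta$ with $\delta < 0.00033$. The first supplies the exponential factor $e^{L(n)} = (1-16/n)^{-1}$ on $I_n$, the second supplies the additive slack $\delta$, and together with the trivial case $m=n$ they yield $(L(n),\delta)$-DP.

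For the ratio bound I would start from the recursion (\ref{eq:base}). In the direction $m = n+1$ it is immediate: since $p_{n+1,l} = (1-2^{-l})p_{n,l} + 2^{-l+1}p_{n,l-1} \gs (1-2^{-l})p_{n,l}$, we get $p_{n,l}/p_{n+1,l} \ls (1-2^{-l})^{-1}$. For $l \in I_n$ one has $l \gs \lceil\log(n)\rceil - 4 \gs \log(n) - 4$, hence $2^{-l} \ls 16/n$ and therefore $p_{n,l}/p_{n+1,l} \ls (1-16/n)^{-1}$. This is exactly where the constant $16 = 2^4$ enters, tracing back to the half-width $4$ of $I_n$, and it is the source of the tightness recorded in Observation~\ref{lowerObs}.

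The opposite direction, bounding $p_{n+1,l}/p_{n,l}$ (equivalently the case $m = n-1$), is the main obstacle. Writing $p_{n+1,l}/p_{n,l} = (1-2^{-l}) + 2^{-l+1}\,p_{n,l-1}/p_{n,l}$, the task reduces to controlling the adjacent-level ratio $p_{n,l-1}/p_{n,l}$ on $I_n$. Below the mode of $M_n$ this ratio is at most $1$ and the small factor $2^{-l} \ls 16/n$ finishes the estimate cleanly; but on the top levels of $I_n$, where the distribution is already decaying, $p_{n,l-1}/p_{n,l}$ exceeds $1$ and the crude bound is no longer sufficient. I expect this to require a dedicated lemma (of the kind relegated to the appendix) giving a uniform-in-$n$ bound on the upper-tail ratio $p_{n,l-1}/p_{n,l}$ for $l$ up to $\lceil\log(n)\rceil + 4$, after which the prefactor $2^{-l}$ again absorbs the excess into $(1-16/n)^{-1}$.

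Finally, for the tail bound I would estimate $\PR{M_n \notin I_n}$ by splitting into the lower tail $\PR{M_n \ls \lceil\log(n)\rceil - 5}$ and the upper tail $\PR{M_n \gs \lceil\log(n)\rceil + 5}$. Fact~\ref{fact:flajolet} together with the moment identities (\ref{eq:exp2}) and (\ref{eq:morvar}) indicates the correct window width, but a plain Markov or Chebyshev argument on $2^{M_n}$ is too weak to reach the explicit constant $0.00033$. The quantitative difficulty here is to push through sharper, uniform-in-$n$ tail estimates, using higher moments of $2^{M_n}$ for the upper tail and of $2^{-M_n}$ for the lower tail (both of which satisfy clean recursions derived from (\ref{eq:base})), so that the total escape probability stays below the claimed $\delta$ for all $n$ in the regime where $L(n)$ is meaningful.
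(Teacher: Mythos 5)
Your skeleton matches the paper's: Fact~\ref{help} with $S_n=I_n$, a pointwise ratio bound on $I_n$ plus a tail bound, and your treatment of the easy direction (via $p_{n+1,l}\gs(1-2^{-l})p_{n,l}$ and $2^{-l}\ls 16/n$ on $I_n$, with tightness witnessed by Observation~\ref{lowerObs}) is exactly right. But the proposal leaves both hard estimates unproven, and that is where essentially all of the paper's work lies. For the direction $m=n-1$ you correctly reduce to controlling $p_{n,l-1}/p_{n,l}$ on the upper levels of $I_n$ and then only say you ``expect this to require a dedicated lemma''. That lemma is the mathematical core of the proof: the paper shows (Lemma~\ref{lem:crutial}) that the bound $p_{N,l+c-1}<2^{c+3}p_{N,l+c}$, once valid at $N=2^{l-1}+1$ for all $c\in[-l+1:4]$, propagates to all larger $N$ by a double induction in $N$ and $c$; anchoring that induction is the genuinely hard part, requiring Claims~\ref{claim4} and~\ref{stepforward}, the numerical anchor at $n=2^7+1$ (Table~\ref{tab:alfa}), and the long appendix Lemmas~\ref{lem:main}, \ref{lem:between} and \ref{lem:end} on monotonicity of the sequences $p_{2^k+1,k+4}$ and $p_{2^k+1,k+5}$. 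Without some substitute for this machinery the theorem is not established. You also omit the small-$n$ regime (the inductive argument only covers $n>2^7$; the paper checks $n\ls 128$ numerically) and the boundary case $n=2^{l}+1$, where $I_n\neq I_{n-1}$ forces a separate check of the ratio with a shifted exponent.

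On the tails your route genuinely differs from the paper's (which sums the explicit formula of Theorem~\ref{flajoletThm} for the lower tail, and for the upper tail couples $M$ with a process that increments deterministically up to level $\lceil\log n\rceil+1$ and then counts lattice points in a three-dimensional simplex, giving $\delta_2\ls 1/(3\cdot 2^{10})$), but one half of your plan breaks. For the upper tail, moments of $2^{M_n}$ do close: the recursion (\ref{eq:base}) gives $\EE{2^{kM_{n+1}}}=\EE{2^{kM_n}}+(2^k-1)\EE{2^{(k-1)M_n}}$, whence $\EE{2^{4M_n}}\approx 13.2\, n^4$ and fourth-moment Markov yields roughly $13.2\cdot 2^{-20}\approx 1.3\cdot 10^{-5}$, well inside budget. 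For the lower tail, however, the analogous computation gives $\EE{2^{-kM_{n+1}}}=\EE{2^{-kM_n}}-(1-2^{-k})\EE{2^{-(k+1)M_n}}$: the $k$-th moment couples to the $(k+1)$-st, the hierarchy never closes, and there is no ``clean recursion'' to run Markov on; the lower tail actually decays doubly exponentially ($p_{n,l}\lesssim e^{-n2^{-l}}$), which is what the paper exploits directly from the explicit formula. Note finally that the budget is tight: the paper's upper-tail bound alone is about $0.0003255$ against a total allowance of $0.00033$, so whatever tail estimates you substitute must be comparably sharp.
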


An explanation is postponed to subsection~\ref{proof1}.

\subsection{General  result on Morris' Counter privacy}

In this part, we show that the Morris' Counter guarantees privacy with both parameters tending fast to zero. The analysis is based on the observations from the previous case.  However, instead of $I_n$ (see Section~\ref{ssec:morris} for the discussion), we consider intervals 
$$
J_n(c)=[\lceil\log(n)\rceil - \lceil c\log(\ln(n))\rceil:\lceil\log(n)\rceil + \lceil c\log(\ln(n))\rceil]\cap[n+1]~,
$$
where $c$ is some positive constant such that $\lceil c\log(\ln(n))\rceil\gs 1$, for large enough $n$.

%Comm ## 28
We can now state our next contribution:
\begin{theorem}\label{thm:addition}
Let $M$ denote the Morris Counter. If $c>0$ satisfies $\lceil c\log(\ln(n))\rceil\gs 1$, % and assume $|n-m|\ls 1$. 
then 
%$$
%\PR{M_n=l} \leqslant \left(1-\frac{16}{n}\right)^{-1}\cdot\PR{M_m=l} + \delta,
%$$
%where $\delta<0.00033$,
$M$ is $\left(\varepsilon(n),\delta(n)\right)$-DP with parameters
$\varepsilon(n)=O\left(\frac{\left(\log(n)\right)^2}{n}\right)$ 
and 
$\delta(n)=O\left(n^{-\left(\ln(n)\right)^{c-1}} + n^{-1}\left(\ln(n)\right)^{-c}\right)$.
\end{theorem}

A proof is given in subsection~\ref{proof2}

\subsection{MaxGeo Counter Privacy}

In this subsection, we present a theorem that shows the privacy guarantees of MaxGeo Counter. Assume that we have $n$ increment requests. In the case of MaxGeo Counter, it means that we generate random variables $X_1,\ldots,X_n$, where  $X_i \sim \mathrm{Geo}({1}/{2})$ are pairwise independent. Ultimately, the result of the counter is maximum over all $X_i$'s, namely $X = \max(X_1,\ldots,X_n)$. Now we are ready to present our second main contribution.

\begin{theorem}\label{maxGeoTheorem}
Let $M$ denote the MaxGeo Counter, and $n$ denote the number of increment requests. Consider $m$ such that $|n-m| \leqslant 1$. Fix $\varepsilon > 0$ and $\delta \in (0, 1)$ and let 
% Comm #21

\[
l_{\varepsilon} = \left\lceil \log\left(\frac{e^{\varepsilon}}{e^{\varepsilon}-1}\right) \right\rceil~.
\]
If
\begin{equation}
\label{eq:maxgeo}
n \gs \frac{\ln(\delta)}{\ln\left(1-2^{-l_{\varepsilon}}\right)} \left(\approx -\frac{\ln(\delta)}{\varepsilon}\right)~,
\end{equation}
then 
$$
\PR{M_n \in S} \ls \mathrm{exp}(\varepsilon)\cdot\PR{M_m \in S} + \delta,
$$
so $M$ is $(\varepsilon,\delta)$-DP.
\end{theorem}

A proof is postponed to subsection~\ref{proof3}.

%?????????????????????

 \section{Practical applications}
\label{practice}

\subsection{Discussion on arragements of the survey scenario}
We usually consider probabilistic counters to be some kind of estimator. There is always a trade-off between its precision and the quality of privacy.
Ultimately, it means that for fixed privacy parameters $(\varepsilon,\delta)$ we can calculate the minimum number of increment requests necessary to satisfy given privacy parameters. This can be done by artificially adding them before actually collecting data. Of course, it has to be taken into account that the initial added value should be subtracted from the final estimation of the appropriate cardinality before publication, and this change can impact the precision of the estimation (especially when the expected number of increment requests is very small). If we can perform such a preprocessing, then for every $(\varepsilon,\delta)$, we can easily know how many artificial counts have to be added. Nevertheless, we have to be aware that if the total number of increment requests is small, we may obtain a poor approximation, so this approach should be used whenever the privacy is much more important than the precision.
%Notice that a similar approach may also be utilized for Morris Counter.

%Comm #23

Note that, in light of our theorems~\ref{thm:main} and~\ref{maxGeoTheorem}, both the Morris Counter and the MaxGeo Counter preserve differential privacy in such a scenario. Assume that at least $n$ users have $\true$, therefore at least $n$ incrementation requests. See that we can either know it based on domain knowledge (e.g., we expect that at least some fraction of users will send $\true$ based on similar surveys) or add $x$ counts to the counter artificially initially. The number $x$ should be chosen according to the maximal amenable value of the parameter $\varepsilon$ for a given application, but we recommend choosing rather small values of $x$. Obviously, in the case of artificial counts, this has to be taken into account when estimating the final sum. Using the Morris counter, we obtain $\left(L(n),0.00033\right)$-DP with 
$$
L(n)=-\ln\left(1-\frac{16}{n}\right) \ls \frac{16}{n-8}~.
$$

\paragraph{Example 2}
Consider a result of Morris Counter with a small number $n$ of increment requests (for example, a number of respondents suffering from a rare illness). Therefore, we will likely require the parameter $\varepsilon$ to be at most some threshold, e.g. $1$. Therefore, from Theorem~\ref{thm:main}, we should add $x$ counts where $L(x) \ls {16}/{(x-8)} \ls 1$, so $x\geqslant 24$. Note that we do not include $n$ in the above formula since it is not known in advance. Therefore, using the Morris Counter, the above survey is at least $(L(n+24), 0.00033)$-DP. However, the estimator should be modified as well, i.e., $M'_n=M_{n+24}$, so $\hat{n'}=\max\{2^{M'_n}-26, 0\}$ (since $2^{M'_n}$ may be smaller than $26$).

On the other hand, using the MaxGeo Counter for a given $\varepsilon$ and $\delta$ we get $\left(\varepsilon, \delta\right)$-DP as long as $n \gs \dfrac{\ln(\delta)}{\ln\left(1-2^{-l_{\varepsilon}}\right)}$, where $l_{\varepsilon} = \left\lceil \log\left(1+{1}/{\varepsilon}\right) \right\rceil$ (see Theorem~\ref{maxGeoTheorem}). %Here we present an example.
% Let us consider the following
\paragraph{Example 3}
Assume that we have at least $n=200$ increment requests. From Theorem~\ref{thm:main}, we have $L(n) \ls {16}/{(n-8)} \ls 0.08334$. Hence, using the Morris Counter, the above survey is $(0.08334, 0.00033)$-DP.

On the other hand, using the MaxGeo Counter for a given $\varepsilon$ and $\delta$ we get $\left(\varepsilon, \delta\right)$-DP as long as $n \gs \dfrac{\ln(\delta)}{\ln\left(1-2^{-l_{\varepsilon}}\right)}$, where $l_{\varepsilon} = \left\lceil \log\left(1+{1}/{\varepsilon}\right) \right\rceil$. 

\paragraph{Example 4}
Let $\varepsilon = 0.5$ and $\delta = {1}/{D^2}$, where $D$ is the number of all the survey participants. After using our theorem and straightforward calculations, we have 
$
n \gs 7\ln(D)~.
$
Say we will have $\lfloor\exp(20)\rfloor$ participants. Then if we have at least $140$ increment requests, we satisfy $(0.5, {1}/{D^2})$-DP.

Note that from a differential privacy perspective, both the general PCSA algorithm and HyperLogLog can be seen as arbitrary postprocessing performed on the $m$ MaxGeo counters. Moreover, since each response goes to one counter only, they are independent of each other, so we can use the parallel composition theorem (see~\cite{DworkAlgo}). 

%Comm #22
\begin{observation}
\label{obs:maxDP}
Assume we have $k$ MaxGeo Counters $M[1], \ldots, M[m]$, which are used either in HyperLogLog or PCSA algorithm. If $j$th MaxGeo Counter is $(\varepsilon_j, \delta_j)$-DP then the chosen algorithm is $(\max\limits_i \varepsilon_i, \max\limits_i \delta_i)$-DP.
\end{observation}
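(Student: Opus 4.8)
The plan is to reduce the statement to a parallel-composition bound on the joint vector of counters and then invoke closure of differential privacy under post-processing. First I would observe that both $\mathrm{HyperLogLog}_n$ and the PCSA estimator $\Xi_n(m)$ are fixed (data-independent) functions of the tuple $(M[1],\ldots,M[m])$ of the $m$ individual MaxGeo counters; the randomness of the algorithm lives entirely in the counters and in the assignment of each incrementation request to a lot. Hence, by the post-processing property of $(\varepsilon,\delta)$-DP (see~\cite{DworkAlgo}), it suffices to prove that the mechanism $\mathcal{M}(x)=(M[1],\ldots,M[m])$ itself is $(\max_i\varepsilon_i,\max_i\delta_i)$-DP; any data-independent function applied afterwards cannot worsen these parameters.

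The core step is the parallel-composition argument. I would fix two neighbouring databases $x,y$ with $\|x-y\|_1\ls 1$ and couple the stochastic averaging so that the $\min(\|x\|_1,\|y\|_1)$ common requests receive identical lot assignments in both instances; the single differing request then lands in exactly one lot, say lot $j_0$. Conditioned on the whole assignment, the counters $M[1],\ldots,M[m]$ are mutually independent, each a function of the requests routed to its own lot, and only $M[j_0]$ sees a changed input --- the remaining $m-1$ counters have identical conditional laws under $x$ and under $y$. For a product event $S=S_1\times\cdots\times S_m$ the joint probability factorises, so applying the $(\varepsilon_{j_0},\delta_{j_0})$-DP guarantee of $M[j_0]$ to the $j_0$-th factor and leaving the others untouched yields
$$
\PR{\mathcal{M}(x)\in S}\ls e^{\varepsilon_{j_0}}\PR{\mathcal{M}(y)\in S}+\delta_{j_0}\ls e^{\max_i\varepsilon_i}\PR{\mathcal{M}(y)\in S}+\max_i\delta_i,
$$
since the unchanged factors multiply $\delta_{j_0}$ by a quantity at most $1$.

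To pass from product events to an arbitrary $S\subseteq \mathrm{Range}(\mathcal{M})$ I would slice $S$ along the $j_0$-th coordinate: write $S$ as a (countable) disjoint union over the values of the remaining coordinates, apply the one-dimensional DP bound for $M[j_0]$ on each slice, and sum against the product law of the independent coordinates. Finally, averaging the resulting inequality over the random lot assignment (the bound holds uniformly in $j_0$, hence in the assignment) gives the claim for $\mathcal{M}$, and post-processing transfers it to HyperLogLog and PCSA. The main obstacle I anticipate is the bookkeeping around the random assignment: one must check that the coupling genuinely isolates the differing request in a single lot and that the independence of the counters survives the conditioning, so that a $\max$ rather than a $\sum$ of the parameters is justified; the slicing extension to non-product sets is then routine once the factorisation is in place.
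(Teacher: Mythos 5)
Your proposal is correct and takes essentially the same route as the paper: the paper's justification of this observation is precisely the reduction to post-processing of the $m$ independent counters plus an appeal to the parallel composition theorem of~\cite{DworkAlgo}, which you reprove in-line via the coupling, factorisation, and slicing argument. The only difference is one of detail --- where the paper cites parallel composition and post-processing, you supply their proofs --- so there is no substantive divergence.
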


\subsection{Comparison of Morris and MaxGeo Counters}

In this subsection, we compare the' privacy and storage properties of data aggregation algorithms based on one of the investigated counters or the standard Laplace method.

We start with auxiliary remarks for the privacy of MaxGeo Counter.
For instance, see that if $\delta$ and $n$ are fixed, then from Theorem~\ref{maxGeoTheorem} and $l_{\varepsilon}\ls \left\lceil\ln\left(1+\varepsilon^{-1}\right)\right\rceil$ we obtain
\begin{equation}
\label{eq:epsmaxgeo}
\varepsilon(n)\gs \left(2^{\left\lfloor -\log\left(1 - \delta^{\frac{1}{n}}\right)\right\rfloor} - 1\right)^{-1}=:\varepsilon_0(n)~.
\end{equation}
We want to optimise $\varepsilon(n)$, so we will consider $\varepsilon_0(n)$ defined as the right-hand side of (\ref{eq:epsmaxgeo}).
In order to limit it let us consider the following function of $x\in\RR_+$: 
\begin{equation}
\label{eq:epsapprox}
\psi(x,\delta):=\left(\left(1 - \delta^{\frac{1}{x}}\right)^{-1} - 1\right)^{-1}=-\frac{\ln(\delta)}{x}+\frac{\ln(\delta)^2}{2 x^2} - \frac{\ln(\delta)^3}{6 x^3} +O(x^{-4})~.
\end{equation}
Naturally, then $\varepsilon_0(n)\gs\psi(n,\delta)=-{\ln(\delta)}/{n} + O(n^{-2})$. Since $\psi$ is decreasing with respect to $x$, we will consider when $\varepsilon_0(n)$ changes. More precisely,
consider a minimal $k$ such that $\varepsilon_0(n)< \psi(n-k,\delta)\ls \varepsilon_0(n-k)$, which appears to be the neat upper bound for $\varepsilon(n)$. However, since $\varepsilon_0(n)$ is the non-ascending step function, we realise that 
$$
\varepsilon_0(n-k)\gs \left(2^{\left\lfloor -\log\left(1 - \delta^{\frac{1}{n}}\right)\right\rfloor-1} - 1\right)^{-1}\!\!\!\!=-\frac{2\ln(\delta)}{n}+\frac{3\ln(\delta)^2}{n^2}-\frac{13\ln(\delta)^2}{3n^3}+O(n^{-4})~.
$$
%Consider some $m\in\NN$ such that $\varepsilon(m-1)\neq\varepsilon(m)$.
%We search for a minimal $n$ such that $\varepsilon(n)=\varepsilon(m-1)$. Let it be $n_0(n,\delta)$. Then $-\log\left(1 - \delta^{\frac{1}{m}}\right)\gs -\log\left(1 - \delta^{\frac{1}{n}}\right)+1$, so $\tau(m,\delta):=(1-\delta^{\frac{1}{m}})^{-1}\gs 2(1-\delta^{\frac{1}{n}})^{-1}$. Assume that $\tau(n_0(n,\delta),\delta)=2\tau(n,\delta)$. Remark that $n_0(n,\delta)$ can be derived precisely using $W$-Lambert special (multi-)function (see \cite{abramowitz+stegun} for a definition and properties). However, the exact form of $n_0(n,\delta)$ is redundant in order to obtain the upper bound
%\begin{equation}\label{eq:n0}
%n\gs \frac{\log(\delta)}{\log(2\delta^{\frac{1}{m}}-1)}=:n_0(m,\delta)~,
%\end{equation}
%$$
%\varepsilon(n)\ls \phi(n,\delta):=\psi(n_0(n,\delta),\delta)
%$$
If we denote $\phi(n,\delta):=\left(2^{\left\lfloor -\log\left(1 - \delta^{\frac{1}{n}}\right)\right\rfloor-1} - 1\right)^{-1}$, then we can summarise our recent considerations in a short way by $\psi(n,\delta)\ls\varepsilon(n)<\phi(n,\delta)$.
Thence, in the case where we fix the parameter $\delta=0.00033$, we obtain 
$$
\frac{8.0164\ldots}{n}+\frac{32.13147\ldots}{n^2} + O(n^{-3})\ls\varepsilon(n)\ls\frac{16.0328\ldots}{n} + \frac{192.789\ldots}{n^2}+O(n^{-3})~.
$$
On the other hand, from Theorem~\ref{thm:main}, we know that when $\delta=0.00033$, then for Morris Counter (with $\varepsilon(n)$ defined by (\ref{eq:epsilon})) the quite similar relation holds:
$$
\varepsilon(n)\ls-\ln\left(1-\frac{16}{n}\right)=\frac{16}{n}+\frac{128}{n^2}+O(n^{-3})~.
$$
Therefore, Morris and MaxGeo Counters behave quite similarly under comparable conditions, and Figure \ref{fig:comparison} confirms this observation. Indeed, in Figure \ref{fig:comparison} we can see that the difference between the values of the parameters $\varepsilon(n)$ for both counters decreases as $n$ increases.

\begin{figure}[ht!]
    \centering

    \includegraphics[width=0.95\textwidth]{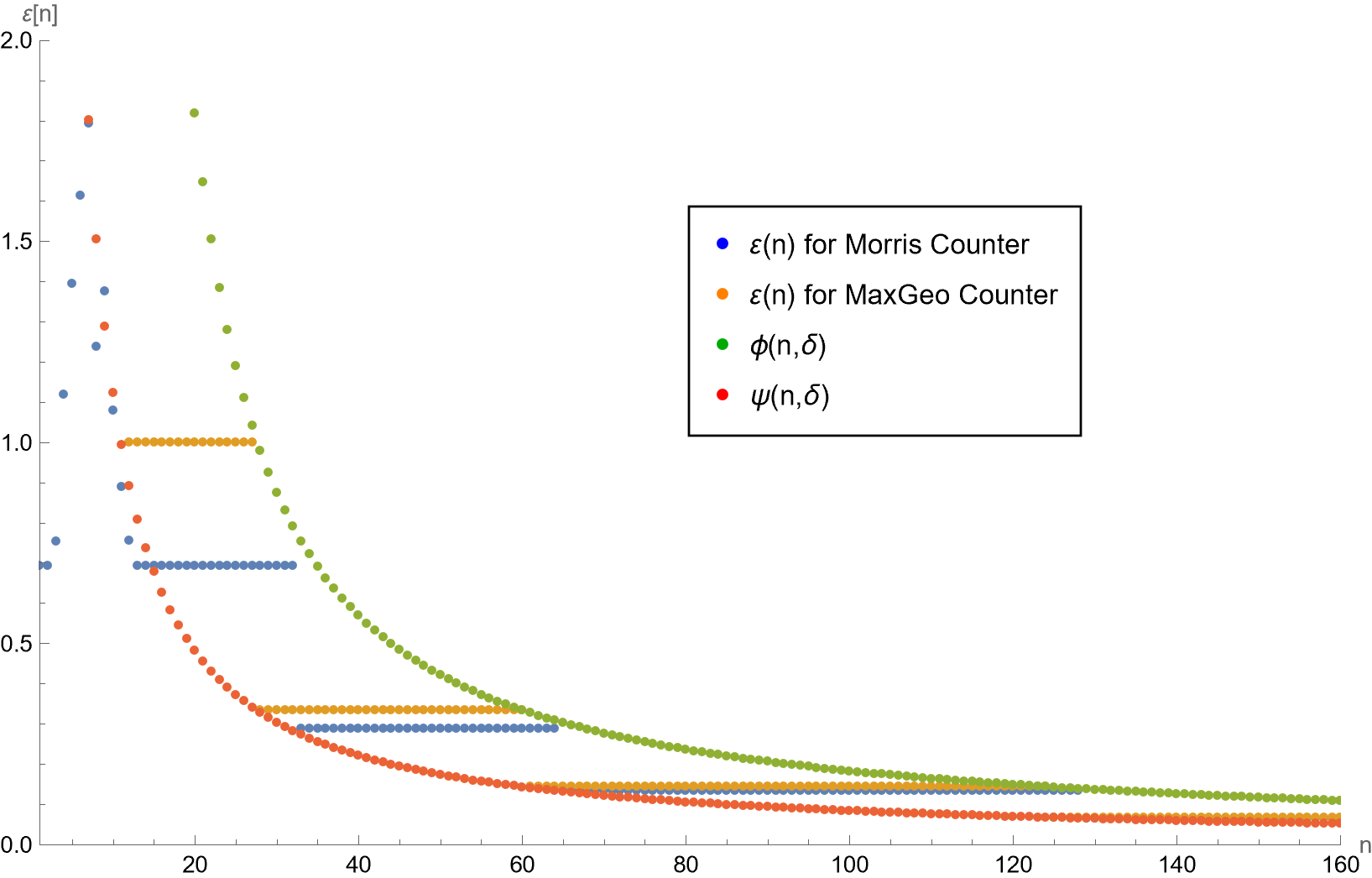}
    \caption{Values of $\varepsilon(n)$ parameters for Morris and MaxGeo Counters compared with boundaries for $\varepsilon(n)$ for MaxGeo Counter: the lower one --- $\psi(n,\delta)$ and the upper one --- $\phi(n,\delta)$ ($n\ls 160$ and $\delta=0.00033$).}
    {\label{fig:comparison}}
\end{figure}

Realise that the previous conclusions remain true if $\delta(n)$ is not constant. This short observation enables us to obtain a more general result:
\begin{fact}
Let $\delta(n)=n^{-c}$ for some constant $c>0$. Then 
$$
\varepsilon(n)\ls\phi(n,\delta(n))=\frac{2c\ln(n)}{n}+\frac{3c^2\ln(n)^2}{n^2}+O\left(\frac{c^3\ln(n)^3}{n^3}\right)
$$
and MaxGeo Counter is $(\phi(n,\delta(n)),\delta(n))$-DP for any $n\in\NN$.
\end{fact}

Notice that in this case, both sequences of parameters tend to $0$, which may be used as an advantage in applications, especially when we expect that the total number of increment requests will be very large.
However, we emphasise that this requires that $\delta(n)$ be negligible.

\begin{figure}[h]
\centering
\begin{tabular}{|l*{3}{|c}|}
\hline
Method & Laplace noise & Morris counter & $\mathrm{LogLog}^{(2)}$ counter \\
 & $\mathcal{L}(n/16)$ & $M_n$ & $(M_n[1],M_n[2])$ \\
\hline
\rule{0mm}{4mm}$(\varepsilon,\delta)$-DP & $(16/n,0)$-DP & $ (\frac{16}{n-8},\delta)$-DP & $(\sim\frac{32.066}{n},\delta)$-DP \\[1mm]
\hline
\rule{0mm}{5mm}Estimator $\hat{n}$ & $n +\mathcal{L}(n/16)$ & $2^{M_n} -2$ & $\sim0.89\cdot2^{\frac{1}{2}\sum_{i=1}^{2} M_n[i]}$ \\[2mm]
\hline
\rule{0mm}{4mm}$\mathrm{Var}(\hat{n})$ & $\frac{n^2}{128}$ & $\frac{n^2+n}{2}$ & $\sim 0.845\cdot n^2$ \\[1mm]
\hline
\rule{0mm}{4mm}Avg. memory & $\log(n)+O(1)$ & $\log(\log(n)) + O(1)$ & $2\log(\log(\frac{n}{2}))+O(1)$ \\[1mm]
\hline
\end{tabular}
\caption{A summary of data aggregation techniques. The standard one is based on the Laplace method, and the rest are based on probabilistic counters. Recall that $\delta=0.00033$.}
\label{fig:comparison2}
\end{figure}

In Figure \ref{fig:comparison2}, we may briefly see that probabilistic counters can be used for data aggregation to decrease memory usage in exchange for a slight increase of the parameter $\delta$ of differential privacy and wider confidence intervals (lower accuracy).
In recent years Big Data related problems became very popular. Note that this kind of application makes major use of memory. When the server aggregates many different data, standard solutions may cause a serious problem with data storage, which can be encountered by using the idea based on a probabilistic counter instead.
\paragraph{Example 5}
Imagine that 100 million people participate in a general health survey with $100$ \textit{yes/no} sensitive questions.For every question, we would like to estimate the number of people who answered \textit{yes}, but we want to guarantee the differential privacy property at a reasonable level. Realise that if the number of \textit{yes} answers is very small for some questions (e.g., when the question is about a very rare disease), then the number of \textit{no} answers may be counted instead. Obviously, this method gains privacy, but loses precision.

According to Figure \ref{fig:comparison2}, if we use the Laplace method, then we may need approximately $100\log(10^8)=2657.54\ldots$ bits to store the counters. Let us note however, if we use the Morris Counter instead, about $100\log(\log(10^8))=473.20\ldots$ bits are needed. Note that all terms $O(1)$ in the ''Average memory'' row of Figure \ref{fig:comparison2}
are bounded by $1$. Hence, its impact is negligible from a practical point of view.

One may also complain about the heavy use of the pseudo random number generator (PRNG) that probabilistic counters make. However, this problem may be resolved by generating the number of increment requests which have to be forgotten until the next update of the counter by using appropriate geometric distributions (see, for example, \cite{10.1145/198429.198435} for a similar approach applied to the reservoir sampling algorithm). This way, the use of PRNG can be substantially reduced.

 \section{Proofs}
\label{sec:proofs}

\subsection{Main Theorem for Morris Counter (Theorem \ref{thm:main})}
\label{proof1}

The proof is complicated and very technical. In order to better understand it, we are going to provide a presentation of a plan and main ideas beneath the parts of the proof. Let $n_k:=2^k+1$ for $k\in\NN$. 

We introduce $\mathcal{P}_k^{(c)}$ as the probability that the Morris Counter $M_n$ being $k+c$ after $n_k$ increment requests, i.e., $p_{n_k,k+c}$. The ''special'' sequences $(\mathcal{P}_k^{(c)})_k$ play a crucial role in the proof. 
\paragraph{Roadmap of the proof:}
We can divide the proof of Theorem \ref{thm:main} into five phases (the main results of the phases are given in brackets):
\begin{enumerate}
\item $\delta$ phase 
%lemmas \ref{lem:lowerdelta} and \ref{lem:upperdelta} and 
(Theorem \ref{thm:morrisdelta}),
\item relations between ''special'' sequences $(\mathcal{P}_{k}^{(c)})_k$ with respect to $c$ (Claim \ref{claim4}),
\item dependencies between consecutive distributions (of $M_{n}$ and $M_{n+1}$) (Claim \ref{stepforward}),
\item extrapolation of $\mathcal{P}_k^{(c)}\leq 2^{c+3} \mathcal{P}_k^{(c+1)}$ property to $N>n_k$ (Lemma \ref{lem:crutial}),
\item $\varepsilon$ phase (Theorem \ref{thm:morriseps}).
\end{enumerate}

During the first phase, we consider a concentration of Morris Counter in the vicinity of its mean value. 
More precisely, we show that the Morris Counter after $n$ increment requests takes values in relatively small intervals $I_n$ with probability $1-\delta$ (note that then $M_n$ satisfies condition (\ref{eq:counterdelta}) for $S_n=I_n$), where $I_n$ is defined as in (\ref{eq:In}) and $\delta$ is some small constant, which arises from the proof.
Note that $I_n$ can be interpreted as confidence intervals at level $1-\delta$ (see, e.g.~\cite{NeymanEstimation37}).
This phase is divided into lemmas \ref{lem:lowerdelta} and \ref{lem:upperdelta}.
The first one uniformly bounds the formula for probabilities given
by the Theorem below, due to Flajolet:
\begin{theorem}[Proposition 1 from~\cite{flajolet1985approximate}]\label{flajoletThm}
The probability $p_{n,l}$ that the Morris Counter has value $l$ after $n$ increment requests is
$$
p_{n,l} = \sum_{j=0}^{l-1} \left(-1\right)^j 2^{-j \left( j-1 \right)/2} \left(1-2^{-\left(l-j\right)}\right)^n \prod_{i=1}^{j} \left(1-2^{-i}\right)^{-1} \prod_{i=1}^{l-1-j} \left(1-2^{-i}\right)^{-1}~.
$$
\end{theorem}

We sum up the bounds on $p_{n,l}$ to obtain a small upper bound for $\delta_1:=\PR{M_n \ls \left\lceil\log(n)\right\rceil-5}$ (Lemma \ref{lem:lowerdelta}).
The same bounds cannot be utilised efficiently in the proof of Lemma \ref{lem:upperdelta}. 
%Comm #12 and ##19
Instead, it couples $M_n$ with a process $X_n$, which increases during the first $\left\lceil\log(n)\right\rceil +1$ steps, and then follows the same update rule, so $M_n\ls X_n$ almost surely. 

Therefore $\delta_2:=\PR{M_n \gs \left\lceil\log(n)\right\rceil+5} \ls \PR{X_n \gs \left\lceil\log(n)\right\rceil+5}$, which is much easier to bound from definition. Note that such a coupling cannot be used in the proof of Lemma \ref{lem:lowerdelta}. The first phase is summarised by Theorem \ref{thm:morrisdelta}, i.e., establishes $\delta=\delta_1+\delta_2$.

In the second phase, we show that   $(\mathcal{P}_k^{(4)})_{k}$ is descending for large enough $k$ and $(\mathcal{P}_k^{(5)})_{k}$ is ascending for big enough $k$ (Lemma \ref{lem:main}). A change of monotonicity is the first obstacle in the proof of the main theorem.
The main idea beneath the proof of this fact is to calculate the differences between consecutive elements of the considered ''special'' sequences by representing them as the sums via application of Flajolet's Theorem \ref{flajoletThm} and realising that usually at most first ten terms of the sums are crucial (on the other hand, taking less than eight terms is rarely sufficient). This is the second issue, which makes the proof so complicated.
Let us note that Theorem \ref{flajoletThm} presents an explicit formula for $\PR{M_n=l}$, which (as we may experience in \ref{append}) is not convenient to analyse. 
However, it is simple enough to find the values numerically (also note that recursive Definition \ref{def:morris} provides those probabilities easily as well, However, this approach is inefficient in terms of memory and time for a large number of requests $n$). 
Therefore, by precise analysis, we can finally check some sums numerically and obtain the thesis of Lemma \ref{lem:main} for $k\gs 15$. However, numerically, one can extrapolate it to some smaller $k$ as well (remark that this proof does not work for small values of $k$, since not only ten terms of the aforementioned sums are important).
The lemma \ref{lem:main} can be used directly to show that $\mathcal{P}_k^{(4)}\ls 2^7 \mathcal{P}_k^{(5)}$ for $k\gs 7$ (Claim~\ref{claim4}). Let us note that, in fact, this result is not true for $k<7$.

The third phase is based on the foregoing intuition: If the ratios of consecutive probabilities of distribution of $M_n$ increase almost exponentially, then the ratios of distribution of $M_{n+1}$ increase similarly (but slightly slower). The proofs mainly use the definition of the Morris counter.
% Then Claim \ref{stepforward} (a simple conclusion from the two mentioned lemmas) summarizes the main result of this phase.
%Afterwards, we provide some lemmas which enable us to establish an upper bound for $\varepsilon(n)$ parameter in formula (\ref{eq:counter1dp}) in the interval $I_n$ for $n>2^7$. 
% \new{}   
{}
% {\no}

The fourth phase (i.e. Lemma \ref{lem:crutial})
shows that if $p_{n_k,k+c}\ls 2^{c+3} p_{n_k,k+c+1}$, for $c\in[-k:4]$, then the same is true if we substitute $n_k$ with a bigger number (i.e., this property is increasing with respect to the $n$ parameter). 
In order to apply this result, we have to satisfy some starting conditions. We have numerically checked the appropriate condition for $k=7$ (presented later, in Table \ref{tab:alfa}). Therefore, the second phase of the proof of Theorem \ref{thm:main} justifies the assumptions of Lemma \ref{lem:crutial} with respect to the parameter $k$, as long as $k\gs 7$ and the third phase let us obtain the appropriate assumptions with respect to $c$. One can check that for $k<7$, an analogous assumption is not true.

The latter phase begins with the application of the result from the previous part. 
We obtain $\varepsilon$ %(given by the formula (\ref{eq:epsilon}))
is at most $L(n)$ (provided in the formulation of Theorem \ref{thm:main}), for $k\gs 7$. % (Theorem \ref{thm:morriseps}). 
The last piece of this puzzle is justified by a numerical evaluation for $k<7$ (presented later, in Figure \ref{fig:epsilon}), which ends the $\varepsilon$ phase and so the whole proof.

%Finally we check analogous results for a smaller number of incrementation requests $n$ numerically (the argument for $n>2^7$ from the proof does not extrapolate for smaller $n$, although then the claim is also true).

%The following Theorem, provided by Flajolet, will be useful in our reasoning:

%We are going to use Theorem \ref{flajoletThm} in all technical proofs of lemmas in \ref{append}.

\paragraph{Proof}

In the following, we present the proof of the main contribution. Nevertheless, some technical lemmas are given in the \ref{append}.

\paragraph{$\delta$ phase}

Let us begin with a reminder. First, $M_n \in \NN_0$ and, moreover,
\begin{equation*}
I_n \subseteq 
\left[\left\lceil\log(n)\right\rceil - 4 :
      \left\lceil\log(n)\right\rceil + 4\right].
\end{equation*}

We provide few facts about the concentration of the distribution of the random variable $M_n$, or more precisely about the probability that $M_n$ will be outside the interval $I_n$.
\begin{lemma}
\label{lem:lowerdelta}
Let $M_n$ be the state of the Morris counter after the $n$ increment requests. Then
$$
\delta_1:=\PR{M_n \ls \left\lceil\log(n)\right\rceil-5}\ls 0.000006515315\ldots~.
$$
\end{lemma}

An increasing sequence $\prod\limits_{i=1}^{k} \left(1-2^{-i}\right)^{-1}$ that emerged in Theorem \ref{flajoletThm} will be indicated by $r_k$ (with $r_0=1$) and we denote its limit $\prod\limits_{i=1}^{\infty} \left(1-2^{-i}\right)^{-1}= 3.46274\ldots$ by $R$.\\ Let us mention this with the notions $q_k$ and $Q$ from  \cite{flajolet1985approximate}, $r_k=\frac{1}{q_k}$ and $R=\frac{1}{Q}$.

\begin{proof}
At first, we want to bound a lower tail of the distribution $\delta_1$.

Here we would like to find a sufficient upper limit for the above probability.
Assume that $l \ls \left\lceil\log(n)\right\rceil-5$.

 Realise that $r_k\ls R$ and that $y=-x+1$ is a tangent line to the plot of $y=-x(x-1)$ in the point $(x,y)=(1,0)$.
Therefore:
\begin{align*}
p_{n,l}\stackrel{\mathrm{Thm}~\ref{flajoletThm}}{\ls} \sum_{j=0}^{l-1} 2^{-\frac{j \left( j-1 \right)}{2}} \left(1-2^{-\left(l-j\right)}\right)^n r_j r_{l-1-j}
\ls R^2 \left(1-2^{-l}\right)^n \sum_{j=0}^{l-1} \sqrt{2}^{-j+1}\\
\ls R^2 \frac{2}{\sqrt{2}-1} \exp(-n2^{-l})=R^2 (2\sqrt{2}+2) \exp(-n2^{-l}).
\end{align*}
%Comm #15 and ## 24

The above formula will help us limit the left tail of the distribution of $M_n$:

\begin{multline*}
\delta_1%=\PR{M_n\ls \left\lceil\log(n)\right\rceil-5}
=\sum_{l=1}^{\left\lceil\log(n)\right\rceil-5}\PR{M_n=l}
 \ls R^2 (2\sqrt{2}+2)\sum_{l=1}^{\left\lceil\log(n)\right\rceil-5} \exp(-n2^{-l}) \\
\ls R^2 (2\sqrt{2}+2) \sum_{k=4}^{\infty}\exp(-2^k)
\ls R^2 (2\sqrt{2}+2) \sum_{k=1}^{\infty}\exp(-16k)\\
=R^2 (2\sqrt{2}+2)\frac{\exp(-16)}{1-\exp(-16)}= 0.000006515315\ldots~.
\end{multline*}
\end{proof}

\paragraph*{Remark}
The bound for $p_{n,l}$ obtained above is useless when $l\gs \log(n)-2$, so it cannot be used in the next lemma for a symmetric upper tail.

\begin{lemma}
\label{lem:upperdelta}
Let $M_n$ be the state of the Morris Counter after the $n$ increment requests. Then 
$$
\delta_2:=\PR{M_n \gs \left\lceil\log(n)\right\rceil+5}\ls 0.000325521\ldots~.
$$
\end{lemma}

\begin{proof}
%Comm #16 and ##25

Consider a process $X=(X_{k\in\! [0:n]})$. Let $X$ initially follow the incrementation rule $\PR{X_k=k+1}=1$ for $k\in[0:\lceil\log n \rceil+1]$. Afterwards, let this Markov chain imitate the transition rule of Morris Counter, that is 
$$
\PR{X_{k+1}=m+1|X_k=m}=\frac{1}{2^m}=1-\PR{X_{k+1}=m|X_k=m}
$$
for $k\gs \left\lceil \log(n)\right\rceil +1$.
Naturally, for $k\ls \left\lceil \log(n)\right\rceil +1$, we have $X_k \gs M_k$, so we may couple realisations of these two processes in such a way that whenever $X$ increases, so is $M$ and if $M$ does not change, then $X$ does not increase as well (note that $X$ has at most the same probability of a positive increase as $M$ at any point in time).

To abbreviate the expressions, let us denote $m=n-\left\lceil\log(n)\right\rceil - 1$ and
$$
\mu_\iota=\PR{X_{k+1}=\left\lceil\log(n)\right\rceil+\iota+1 | X_k =\left\lceil\log(n)\right\rceil+\iota}=\frac{1}{2^{\left\lceil\log(n)\right\rceil+\iota}}=1-\nu_\iota~,
$$
for any $\iota\in\ZZ$.
Moreover, let us define a three-dimensional discrete simplex:
$$
S_k^{(3)}=\{\bar{l}=(l_1,l_2,l_3)\in\NN_0^3:l_1+l_2+l_3\ls k\}~.
$$

%Comm 16

Thus,

\begin{align*}
\delta_2%=\PR{M_n\gs \left\lceil \log(n)\right\rceil +5}
\ls \PR{X_n\gs \left\lceil \log(n)\right\rceil +5}
%Then
%\begin{align*}
%\PR{X_n\gs \left\lceil \log(n)\right\rceil +5}
&=\sum_{\bar{l}\in S_{m-3}^{(3)}} \nu_2^{l_1} \mu_2 \nu_3^{l_2} \mu_3 \nu_4^{l_3} \mu_4
%\left(1-\frac{1}{2^{\left\lceil\log(n)\right\rceil+2}}\right)\frac{1}{2^{\left\lceil\log(n)\right\rceil+2}}\left(1-\frac{1}{2^{\left\lceil\log(n)\right\rceil+3}}\right)\frac{1}{2^{\left\lceil\log(n)\right\rceil+3}}\left(1-\frac{1}{2^{\left\lceil\log(n)\right\rceil+4}}\right)\frac{1}{2^{\left\lceil\log(n)\right\rceil+4}}\\
\ls \sum_{\bar{l}\in S_{m-3}^{(3)}} \dfrac{1}{2^{3\left\lceil\log(n)\right\rceil +9}}\\
&=\sum_{k=0}^{m-3} \binom{k+3}{2} \dfrac{1}{2^{3\left\lceil\log(n)\right\rceil +9}}
\ls \frac{1}{2^{10} n^3}\sum_{k=3}^{m}k^2-k~.
\end{align*}

Realise that $\sum\limits_{k=3}^{m}k=(m-2)(m+3)/2$ and $\sum\limits_{k=3}^{m}k^2 = (m-2)(2m^2+7m+15)/6$, so
\begin{align*}
\delta_2&
%\ls \PR{X_n\gs \left\lceil \log(n)\right\rceil +5}
\ls \frac{1}{2^{10} n^3}\frac{1}{6}(m-2)(2m^2+4m+6)=\frac{1}{3\cdot2^{10} n^3}(m^3-m-6)\\
&\ls \frac{m^3}{3\cdot2^{10} n^3}\ls \frac{1}{3\cdot2^{10}}=0.000325521\ldots~.
\end{align*}
Note that when $m<3$ (that is, when $n<7$), then the above sums are empty, but on the other hand $\lceil\log(n)\rceil+5>n+1$, so the inequality is trivially true.
\end{proof}

\begin{theorem}\label{thm:morrisdelta}
The state of the Morris Counter after $n$ increment requests is \textbf{not} in the set 
$$
I_n=[\left\lceil\log(n)\right\rceil-4: \left\lceil\log(n)\right\rceil+4]\cap [n+1]~
$$
with probability $\delta < 0.00033$.
\end{theorem}

\begin{proof}
Realise that $\PR{M_n\in [1:n+1]}=1$. %Comm #17 and ##27
This observation, together with the lemmas \ref{lem:lowerdelta} and \ref{lem:upperdelta} yields 
$$
\delta := \PR{M_n\notin I_n}=\delta_1 + \delta_2 %\ls 0.000006515315\ldots + 0.000325521\ldots 
< 0.00033~.
$$%}{\no}

\end{proof}

\paragraph{The second phase}
In this part of the investigation, we try to establish the $\varepsilon(n)$ parameter of DP of $M_n$.
In fact, it remains to examine the property (\ref{eq:counter1dp}) in the interval $I_n$, as Theorem \ref{thm:morrisdelta} entails (\ref{eq:counterdelta}) for $S_n=I_n$.
Therefore, we are interested in finding the upper bound for maximal privacy loss for any $n\in\NN$ and $k\in I_n$, namely:
\begin{equation}
\label{eq:epsilon}
\varepsilon(n)=\max\left\{\left|\ln\left(\frac{p_{n\pm 1,k}}{p_{n,k}}\right)\right|:k\in I_n\right\}~.
\end{equation}
Actually, we may consider the sign $'+'$ instead of $'\pm'$ in (\ref{eq:epsilon}), because $|\ln(x)|=|\ln({1}/{x})|$.
However, when $I_n\neq I_{n\pm 1}$, we have to behave carefully, so in particular, an additional cheque of privacy loss with the sign $'-'$ is needed when $n$ is of a form $2^l+1$ for some $l\in\NN$.

\begin{claim}
\label{claim4}
For $k\gs 7$, we have $p_{2^k+1,k+4}\ls 2^7 p_{2^k+1,k+5}$.
\end{claim}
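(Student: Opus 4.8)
Throughout write $p=2^{-(k+4)}$ and $n=2^k+1$, and recall $p_{n,l}=\PR{M_n=l}$. The plan is to analyse the upper tail of $M_n$ through first-passage times rather than through the explicit formula of Theorem~\ref{flajoletThm}: although that formula is exact, in the upper tail its summands nearly cancel (the $j=0$ and $j=1$ terms are of comparable magnitude and opposite sign), so bounding the small difference $p_{n,l}$ directly from it is awkward. Instead, let $T_l$ be the first request at which the counter reaches level $l$. By Definition~\ref{def:morris} the waiting times between successive levels are independent, $T_l=\sum_{j=1}^{l-1}G_j$ with $G_j\sim\mathrm{Geo}(2^{-j})$, and conditioning on the passage time gives the exact representation
$$
p_{n,l}=\sum_{t\ls n}\PR{T_l=t}\,(1-2^{-l})^{\,n-t}.
$$

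The key reduction is to compare $p_{n,k+4}$ and $p_{n,k+5}$ at a common source of randomness. Since $T_{k+5}=T_{k+4}+G$ with $G\sim\mathrm{Geo}(p)$ independent of $T_{k+4}$, conditioning both quantities on $T_{k+4}=t$ and writing $s=n-t$ yields, after summing the geometric series for $G$, the closed forms
$$
a(s):=(1-p)^{s},\qquad b(s):=2(1-p/2)^{s}\bigl(1-\beta^{s}\bigr),\qquad \beta:=\frac{1-p}{1-p/2},
$$
so that $p_{n,k+4}=\EE{a(n-T_{k+4})}$ and $p_{n,k+5}=\EE{b(n-T_{k+4})}$ (expectations restricted to $T_{k+4}\ls n$). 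The pointwise ratio is then exactly $a(s)/b(s)=\beta^{s}/\bigl(2(1-\beta^{s})\bigr)$, which is decreasing in $s$ and falls below $2^{7}$ precisely once $s\gs s_0$ with $s_0=\Theta(2^{k-3})$; this is the regime in which the counter reaches level $k+4$ comfortably before the final request. For these terms the comparison $a(s)\ls 2^7 b(s)$ holds term by term.

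It remains to control the opposite regime $s<s_0$, i.e. the paths that first reach $k+4$ only in the last $\approx n/8$ requests, where $a(s)/b(s)$ blows up. Here I would bound the contribution of these paths to $p_{n,k+4}$ by the late-arrival mass $q_{k+4}(n)-q_{k+4}(n-s_0)$, where $q_l(m):=\PR{M_m\gs l}$, and weigh it against a lower bound on $p_{n,k+5}=q_{k+5}(n)-q_{k+6}(n)$. The easy upper estimates come from $q_l(m)\ls (m+2)2^{-l}$, which is Markov's inequality applied to $2^{M_m}$ using~(\ref{eq:exp2}); the choice $n=2^k+1$ makes $m\,2^{-l}$ equal to clean constants ($\tfrac1{16},\tfrac1{32},\dots$). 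The matching lower bound on $q_{k+5}(n)$ comes from the first-passage picture by requiring the counter to reach $k+4$ within the first $n/2$ requests and then to increment once more within the remaining budget.

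The main obstacle is exactly this lower bound on the deep-left-tail probability $q_{k+5}(n)$ (equivalently, a quantitatively sufficient lower bound on $p_{n,k+5}$): since $\EE{T_{k+4}}=2^{k+4}-2\approx 16n$, reaching level $k+4$ within $n$ requests is already a rare event and reaching $k+5$ is rarer still, so the estimate must be accurate to within a factor of about $8$ in order to beat the constant $2^{7}$. I expect that purely Markov-type bounds leave the ratio just above $128$, so the argument must combine the first-passage decomposition above with the fact that the true ratio is only $\approx 16$; the slack deliberately built into the constant $2^{7}$ is what makes these estimates close.
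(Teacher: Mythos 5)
Your first-passage reduction is correct as far as it goes, and it is a genuinely different route from the paper's: the representation $p_{n,l}=\sum_{t\le n}\Pr[T_l=t]\,(1-2^{-l})^{n-t}$, the closed forms $a(s)$ and $b(s)$ with $\beta=(1-p)/(1-p/2)$, the pointwise ratio $a(s)/b(s)=\beta^{s}/\bigl(2(1-\beta^{s})\bigr)$ decreasing in $s$, and the threshold $s_0=\Theta(2^{k-3})$ are all right. But the argument stops exactly where the difficulty begins, and your plan for the region $s<s_0$ rests on a false premise. You assert that ``the true ratio is only $\approx 16$'' and that $2^7$ carries deliberate slack. In fact $p_{2^k+1,k+4}/p_{2^k+1,k+5}$ equals $129.454\ldots>2^7$ at $k=6$ and $125.065\ldots<2^7$ at $k=7$ (see Table~\ref{tab:alfa}, row $i=11$, and the remark following the proof of Lemma~\ref{lem:main}): the claimed inequality is \emph{false} one step earlier, and at $k=7$ it holds with a margin of only about $2.3\%$. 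So any proof must be accurate to within a few percent. Worse, the late-arrival region is not a perturbation: conditioned on the rare event $T_{k+4}\le n$, the arrival time typically falls within the last $\approx n/8$ requests (this is precisely why the ratio sits near $2^7$ rather than near $16$), so a constant fraction -- roughly half -- of the mass of $p_{n,k+4}$ lives where your pointwise comparison fails, and it must be weighed against the surplus $2^7b-a$ accumulated on $s\ge s_0$, a difference of comparable quantities. Markov's inequality applied to $2^{M_n}$ cannot participate in such an argument: it gives $\Pr[M_n\ge k+4]\le (n+2)2^{-(k+4)}\approx 1/16$, whereas the true value is $\approx 1.9\cdot 10^{-5}$, a loss of three orders of magnitude; and a difference $q_{k+4}(n)-q_{k+4}(n-s_0)$ cannot be controlled by upper bounds on both terms anyway. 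The first-moment lower bound you sketch for $q_{k+5}(n)$ likewise loses constant factors you cannot afford.

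For comparison, the paper sidesteps tight tail estimates altogether: Lemma~\ref{lem:main} shows, by a heavy computer-assisted analysis of Flajolet's formula (Theorem~\ref{flajoletThm}), that $p_{2^k+1,k+4}$ is descending in $k$ while $p_{2^k+1,k+5}$ is ascending in $k$; hence the ratio is descending, and the claim for all $k\ge 7$ follows from the single numerical evaluation $125.065\ldots<2^7$ at $k=7$. That reduction to one numerically checked point is what absorbs the $2.3\%$ tightness. To salvage your renewal-theoretic route you would need an essentially exact description of the law of $T_{k+4}$ restricted to $[0,n]$, uniformly in $k$ and accurate to a few percent, which is a substantially harder task than the monotonicity statement the paper proves.
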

The above claim is the result of a simple application of Lemma \ref{lem:main} from \ref{append}.

\paragraph{The third phase}

\begin{claim}
\label{stepforward}
If for any given $n$, there exists an ascending and positive sequence $(\alpha_i)_{i=1}^{n}$ such that $$(\forall\; i\in [1:n])\; p_{n,i}= 2^i \alpha_i p_{n,i+1},$$ then there also
exists an ascending and positive sequence $(\alpha_i')_{i=1}^{n+1}$ such that 
$$
(\forall\; i\in [1:n+1])\; (p_{n+1,i}= 2^i \alpha_i' p_{n+1,i+1})\wedge (\forall\; i\in [1:n])\; (\alpha_i'<\alpha_i)~.
$$ 
\end{claim}
This claim arises from lemmas \ref{lem:between} and \ref{lem:end} from \ref{append}. 

\paragraph{The fourth phase}
We use Claim \ref{stepforward} to guarantee starting conditions for the next Lemma \ref{lem:crutial}. However, in order to apply Lemma \ref{lem:crutial}, we will also use Claim \ref{claim4}, which assumes that $n\gs 2^7 +1$. Hence, we would like to gather some information about the distribution of $M_{2^7+1}$. More precisely, we are interested in the behaviour of $\theta_i={p_{129,i}}/{p_{129,i+1}}$ for $i\ls 11$, which we present in Table \ref{tab:alfa}.
\begin{table}[h!]
\centering
\begin{tabular}{c|*{3}{c}}
%\hline
i & $\theta_i$ & $2^{i-4}$ & $2^{4-i} \theta_i$\\
\hline\hline
1 & $9.6205\ldots\cdot 10^{-24}$ & 0.125 & $7.6964\ldots\cdot 10^{-23}$\\
\hline
2 & $1.73351\ldots\cdot 10^{-9}$ & 0.25 & $6.93402\ldots \cdot 10^{-9}$\\
\hline
3 & $0.000119359\ldots$ & 0.5 & $0.000238718\ldots$\\
\hline
4 & $0.0140238\ldots$ & 1 & $0.0140238\ldots$ \\
\hline
5 & $0.158163\ldots$ & 2 & $0.0790814\ldots$ \\
\hline
6 & $0.771817\ldots$ & 4 & $0.192954\ldots$ \\
\hline
7 & $2.67702\ldots$ & 8 & $0.334628\ldots$ \\
\hline
8 & $7.83367\ldots$ & 16 & $0.489604\ldots$ \\
\hline
9 & $20.8095\ldots$ & 32 & $0.650297\ldots$ \\
\hline
10 & $52.0472\ldots$ & 64 & $0.813238\ldots$ \\
\hline
11 & $125.065\ldots$ & 128 & $0.977073\ldots$ \\
\end{tabular}
\caption{Ratios of adjacent probabilities of the distribution of $M_{2^7+1}$, compared with the exponential function of base $2$.}
\label{tab:alfa}
\end{table}
%Comm #18
We briefly see a superexponential trend of proportions $\theta_i$, so the possibility of using Claim \ref{stepforward} for $n\gs 2^7+1$ is justified. It might seem that the choice of $n$ is arbitrary, but it occurs that the distribution of $M_{2^6+1}$ does not satisfy the necessary assumptions for privacy loss, although $M_{2^6+1}$ can still fulfil the property of $(\varepsilon(n),\delta)$-DP with the parameters given in Theorem \ref{thm:main}.
\begin{lemma}
\label{lem:crutial}
Let $k\in\NN\setminus\{0\}$ and $n_k=2^{k}+1$. If $p_{n_k,k+c}\ls 2^{c+3}p_{n_k,k+c+1}$ for every $c$ in the interval $[-k:4]$, then 
$$
(\forall\; N\gs n_k)(\forall\; c\in [-k:4])\; p_{N,k+c}< 2^{c+3}p_{N,k+c+1}~.
$$
\end{lemma}

\begin{proof}
%Let $c=-l+2$ and assume that for some $N\in [n+1:2n-1]$, $p_{N,1}<2^{-l+5}p_{N,2}$, then
%\begin{align*}
%p_{N+1,1}=\frac{1}{2}p_{N,1}< \frac{1}{2}2^{-l+5}p_{N,2}< \frac{3}{4}2^{-l+5}p_{N,2}+\frac{1}{2}2^{-l+5}p_{N,1}=2^{-l+5}p_{N+1,2}~.
%\end{align*}
Realise that for $c=-k$, the required inequality is trivial.
Therefore, we can safely consider only $c\in [-k+1:4]$. We would like to prove this inductively with respect to $c$ and $N$. Assume that for some $N\gs n_k$ and any $d\in\{c-1, c\}$ we have $p_{N,k+d}< 2^{d+3}p_{N,k+d+1}$. %Comm #19

Then also
\begin{align*}
p_{N+1,k+c}&\eqthr p_{N,k+c}(1-2^{-k-c})+p_{N,k+c-1} 2^{-k-c+1}\\
&\ls 2^{3+c} p_{N,k+c+1}(1-2^{-k-c})+2^{3+(c-1)} p_{N,k+c} 2^{-k-c+1}\\
&< 2^{3+c} (p_{N,k+c+1} (1-2^{-k-c-1})+2^{-k-c} p_{N,k+c})=2^{3+c}p_{N+1,k+c+1}~.
\end{align*}
If we start with $c=-k+1$, then let us prove inductively the appropriate condition for all $N\gs n_k$.
In addition, the thesis is followed by the induction with respect to $c$.
\end{proof}

\paragraph{$\varepsilon$ phase}
Claims \ref{claim4} and \ref{stepforward}, together with Table \ref{tab:alfa} enable us to apply Lemma \ref{lem:crutial} for $n=2^k+1$ for any $k\gs 7$.

\begin{theorem}\label{thm:morriseps}
Let $n> 2^7=128$ and $k\in I_n$. Then 
$$
1-\frac{16}{n}\ls \frac{p_{n\pm 1,k}}{p_{n,k}}\ls 1+\frac{16}{n}~.
$$
\end{theorem}

\begin{proof}
According to the previous discussion on formula (\ref{eq:epsilon}), we examine 
%At first, realize that we only need to consider $\frac{p_{n+1,k}}{p_{n,k}}$, because $\ln(\frac{p_{n+1,k}}{p_{n,k}})=-\ln(\frac{p_{n,k}}{p_{n+1,k}})$, which is 'almost' equivalent to $-\ln(\frac{p_{n-1,k}}{p_{n,k}})$.\marginpar{D: see>> discussion przed proofem}
$$
\frac{p_{n+1,k}}{p_{n,k}}\eqthr \frac{p_{n,k}(1-2^{-k})+2^{-k+1}p_{n,k-1}}{p_{n,k}}=1+2^{-k}\left(-1+2\frac{p_{n,k-1}}{p_{n,k}}\right)~.
$$
Let us denote $l=\left\lceil\log(n)\right\rceil$ and $c=k-l\in [-4:4]$.
Then Lemma \ref{lem:crutial} bears $p_{n,k-1}\ls~2^{c+3}p_{n,k}$, so
$$
\frac{p_{n+1,k}}{p_{n,k}}\ls 1+2^{-l-c}(-1+2^{c+4})<1+2^{-l+4}=1+\frac{16}{2^{\lceil\log(n)\rceil}}<1+\frac{16}{n}~.
$$
%If $\log(n)=l$, then $\frac{p_{n,k-1}}{p_{n,k}}\ls 2^{c+2}$, so $\frac{p_{n+1,k}}{p_{n,k}}<1+\frac{8}{n}$.
Realise that if $n=2^{l-1}+1$ for some $l\in\NN$, then a little adjustment is necessary. Indeed, let now $c-1=k-l\in[-4:4]$, and once again, Lemma \ref{lem:crutial} provides $p_{n-1,k-1}<~2^{c+2}p_{n-1,k}$. However, it still holds that:
$$
\frac{p_{n,k}}{p_{n-1,k}}=1+2^{-k}\left(-1+2\frac{p_{n-1,k-1}}{p_{n-1,k}}\right)\ls 1+2^{-l-c+1}\left(-1+2^{c+3}\right)< 1+\frac{16}{n}~.
$$
On the other hand, we  have inequalities $p_{n+1,k}>\left(1-2^{-l-c}\right)p_{n,k}$ and $p_{n,k}>\left(1-2^{-l-c}\right)p_{n-1,k}$ for any $c\in [-4:4]$, so both fractions exceed $1-{16}/{n}$.
\end{proof}

Theorem \ref{thm:morriseps} only provides $\varepsilon(n)\ls -\ln\left(1-{16}/{n}\right)$ for $n>128$ (compare with (\ref{eq:epsilon})). However, in Figure \ref{fig:epsilon} we may briefly see that the above inequality is true for smaller numbers of requests $n$ as well.

Having all the technical lemmas, we are now ready to prove Theorem~\ref{thm:main}.

%Comm ##28

\begin{proof}(of Theorem~\ref{thm:main}) 
Suppose that $S_n=I_n$ in Fact \ref{help}. Then from theorems \ref{thm:morrisdelta} and \ref{thm:morriseps} we can easily see that $\PR{M_n\notin S_n}<0.00033$ and 
$$
(\forall\; m\in \{n-1,n+1\})\;(\forall\; l\in S_n)\; \PR{M_n=l} \leqslant \left(1-\frac{16}{n}\right)^{-1}\cdot\PR{M_m=l}~,
$$
hence, from Fact \ref{help} we obtain the main result.
\end{proof}

\begin{figure}[ht!]
    \centering
    \includegraphics[width=\textwidth]{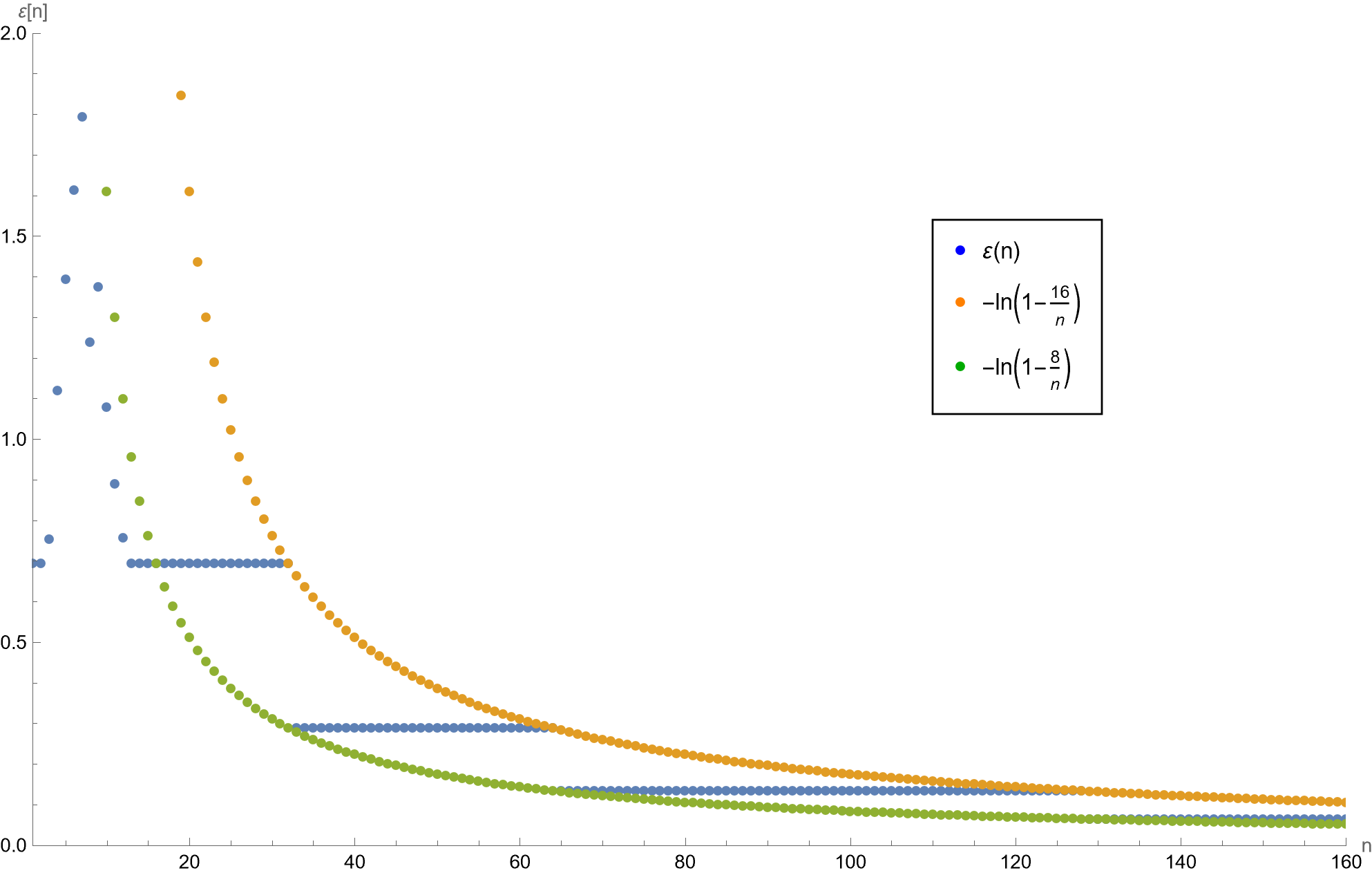}
    \caption{Exact values of $\varepsilon(n)$ parameter for $n\ls 160$ compared with plots of sequences $-\ln(1-{16}/{n})$ and $-\ln(1-{8}/{n})$.}
    {\label{fig:epsilon}}
\end{figure}
%\marginpar{DB: ziomus sie skarzyl na slabe rysunki}

In the Figure \ref{fig:epsilon} one can see that values of $\varepsilon(n)$ are strictly between sequences $-\ln(1-~{8}/{n})$ and $-\ln(1-~{16}/{n})$ for $n\in[17:160]$. We can also observe that $\varepsilon(n)~\approx~2^{4-\lceil\log(n)\rceil}$ in this interval. Note that $\lceil\log(n)\rceil\ls 4$ for $n\ls 16$, so $\lceil\log(n)\rceil -4<1$, but $M$ is always positive. This can justify the chaotic behaviour of the process for $n\ls 16$.
Nevertheless, Figure \ref{fig:epsilon} affirms the quality of $\varepsilon(n)$ parameter established in Theorem \ref{thm:main}. 

Moreover, let us mention that for the strategy used in the proof, we cannot pick smaller $\varepsilon$ of the same form:
\begin{observation}\label{lowerObs}
Assume that for $I_n$, defined in (\ref{eq:In}), $\delta=\Pr{M_n\notin I_n}$ and $\varepsilon(n)$, defined via (\ref{eq:epsilon}), is of the form $|-\ln(1- \frac{c}{n})|$. Then $c=16$ is optimal constant in Theorem~\ref{thm:main} (it cannot be improved). Indeed, see that the bound is reached for $n=32$:
$$
\dfrac{p_{33,1}}{p_{32,1}}=\frac{1}{2}=1-\frac{16}{32}~.
$$
\end{observation}
As can be seen in Figure \ref{fig:epsilon}, the minimal $c$ for $\varepsilon(n)$, where $n$ is of the form $2^k$, for $k>5$, is also very close to $16$.

\subsection{General  result on Morris' Counter privacy (Theorem~\ref{thm:addition})}
\label{proof2}

Remark that $\lceil c\log(\ln(n))\rceil\gs 1$ can always be guaranteed, when $n$ is large enough.

\begin{proof}
For our convenience, let us denote $\rho:=\lceil c\log(\ln(n))\rceil$. We assume that $\rho\gs 1$.
First, we show that $\delta_1^{\ast}:=\PR{M_n\ls \lceil\log(n)\rceil - \rho -1}=O\left(n^{-\left(\ln(n)\right)^{c-1}}\right)$. The proof is analogous to the one of Lemma \ref{lem:lowerdelta} (we omit similar parts).
Indeed,
\begin{multline*}
\delta_1^{\ast}%=\PR{M_n\ls \left\lceil\log(n)\right\rceil-5}
=\sum_{l=1}^{\left\lceil\log(n)\right\rceil-\rho-1}\PR{M_n=l}~
 %\ls R^2 (2\sqrt{2}+2)\sum_{l=1}^{\left\lceil\log(n)\right\rceil-5} \exp(-n2^{-l}) \\
\ls R^2 (2\sqrt{2}+2) \sum_{k=\rho}^{\infty}\exp(-2^k)\\
\ls R^2 (2\sqrt{2}+2) \sum_{k=1}^{\infty}\exp(-2^{\rho} k)\\
\ls R^2 (2\sqrt{2}+2)\frac{\exp(-(\ln(n))^c)}{1-\exp(-(\ln(n))^c)}=O\left(n^{-\left(\ln(n)\right)^{c-1}}\right)~.
\end{multline*}
%koniec pierwszego lematu
Now, we are going to prove $\delta_2^{\ast}:=\PR{M_n\geqslant \lceil\lg(n)\rceil + \rho +1}=O\left(n^{-1}\left(\ln(n)\right)^{-c}\right)$.
We use a similar notation and technique as in the proof of Lemma \ref{lem:upperdelta}, but this time we utilise a discrete discrete two-dimensional simplex $(\rho -1)$: 
$$
S_{k}^{(\rho-1)}=\left\{\bar{l}=(l_1,l_2,\ldots,l_{\rho-1})\in\NN_0^{\rho-1}: \sum_{i=1}^{\rho-1} l_i\ls k\right\}~.
$$
We couple $(M_n)_n$ with the same process $(X_n)_n$ as in Lemma \ref{lem:upperdelta}. Roughly speaking, $X_0=1$ and $X_n$ almost always increments by $1$ until $n=\lceil\log(n)\rceil +1$ and it further follows the same incrementation rule as the Morris counter. Then
\begin{align*}
\delta_2^{\ast}&\ls \PR{X_n\gs \left\lceil \log(n)\right\rceil + \rho +1}
=\sum_{\bar{l}\in S_{m-\rho+1}^{(\rho-1)}} \prod_{i=1}^{\rho-1} \nu_{i+1}^{l_i} \mu_i \ls \sum_{\bar{l}\in S_{m-\rho+1}^{(\rho-1)}} \prod_{i=1}^{\rho-1} \mu_i \\
&= \sum_{\bar{l}\in S_{m-\rho+1}^{(\rho-1)}} 2^{-[(\rho-1)\left\lceil\log(n)\right\rceil +\sum_{i=2}^{\rho} i]}\\
&=\sum_{k=0}^{m-\rho+1} \binom{k+\rho-1}{\rho-2} 2^{-[(\rho-1)\left\lceil\log(n)\right\rceil +\frac{(\rho+2)(\rho-1)}{2}]}\\
&\ls m^{\rho-2} n^{1-\rho} 2^{-\frac{\rho^2+\rho-2}{2}} \ls n^{-1} 2^{-\rho+1}=O\left(n^{-1}\left(\ln(n)\right)^{-c}\right)~.
\end{align*}
%koniec drugiej czêsci
%Comm 20
Therefore $\PR{M_n\notin J_n(c)}=\delta_1^{\ast}+\delta_2^{\ast}=O\left(n^{-\left(\ln(n)\right)^{c-1}}+ n^{-1}\left(\ln(n)\right)^{-c}\right)$

In addition, we would like to consider fractions $\frac{p_{n+1,k}}{p_{n,k}}$ for $k\in J_n(c)$ as in the proof of Theorem \ref{thm:morriseps}.
Indeed
$$
1-2^{-k}\ls \frac{p_{n+1,k}}{p_{n,k}}=1-2^{-k}+2^{-k+1}\frac{p_{n,k-1}}{p_{n,k}}~.
$$
We are going to use another formula from \cite{flajolet1985approximate}. For any $n\in\NN$ and $k\in [1:n+1]$,
$$
p_{n,k}=2^{-\frac{k(k-1)}{2}}\sum_{\bar{l}\in S_{n-k+1}^{(k)}} \prod_{i=1}^{k} \left(1-2^{-i}\right)^{l_i}~.
$$
Let us denote the above sum by $\varsigma_k(n-k+1)$.
We note that 
$$
2^{-k+1}\frac{p_{n,k-1}}{p_{n,k}}=2^{-k+1}\frac{2^{-\frac{(k-2)(k-1)}{2}}\varsigma_{k-1}(n-k+2)}{2^{-\frac{k(k-1)}{2}}\varsigma_k(n-k+1)}=\frac{\varsigma_{k-1}(n-k+2)}{\varsigma_k(n-k+1)}~.
$$
Realise that $\varsigma_{k-1}(n-k+2)\ls \varsigma_{k-1}(n-k+1) \sum_{i=1}^{k-1}\left(1-2^{-i}\right)=\varsigma_{k-1}(n-k+1) \left(k-2+2^{k-1}\right)$. This follows from the fact that  each summand of $\varsigma_{k-1}(n-k+2)$ can be obtained from some summands of $\varsigma_{k-1}(n-k+1)$ by multiplication by one of the terms $\left(1-2^{-i}\right)$.
Moreover, note that $\varsigma_k(n-k+1)$ has $\binom{n-k+1 +(k-1)}{k-1}=\binom{n}{k-1}$ summands and, similarly, $\varsigma_{k-1}(n-k+1)$ has 
$\binom{n-1}{k-2}$ summands.
One can briefly see that a function $f(i)=\left(1-2^{-i}\right)$ is increasing, hence
\begin{align*}
\varsigma_{k}(n-k+1)&\gs \sum_{\bar{l}\in S_{n-k+1}^{(k)}} \left(1-2^{-k+1}\right)^{l_{k-1}+l_k}\prod_{i=1}^{k-2} \left(1-2^{-i}\right)^{l_i}\\
&= \sum_{\bar{l}\in S_{n-k+1}^{(k-1)}} (l_{k-1} +1)\prod_{i=1}^{k-1} \left(1-2^{-i}\right)^{l_i}~.
\end{align*}
Due to the monotonicity of $f$, one can use cascading substitutions: some of $f(k)$ by $f(k-1)$, then some of $f(k-1)$ by $f(k-2)$ etc., in order to balance the numbers of all the occurring summands, what provides:
$$
\varsigma_{k}(n-k+1)\gs \sum_{\bar{l}\in S_{n-k+1}^{(k-1)}} \frac{\binom{n}{k-1}}{\binom{n-1}{k-2}}\prod_{i=1}^{k-1} \left(1-2^{-i}\right)^{l_i}=\frac{n}{k-1} \varsigma_{k-1}(n-k+1)~.
$$
Therefore
$\varsigma_{k-1}(n-k+2)\ls \frac{\left(k-2+2^{k-1}\right)(k-1)}{n}\varsigma_{k}(n-k+1)$ and finally we obtain
$$
\frac{p_{n+1,k}}{p_{n,k}}\ls 1-2^{-k} +\frac{\left(k-2+2^{k-1}\right)(k-1)}{n}<1+\frac{(k-1)^2}{n}~.
$$
When $k\in J_n(c)$, then 
\begin{multline*}
\exp\left(-O\left(\frac{\log(n)}{n}\right)\right)=1-2^{-\lceil\log(n)\rceil +\lceil\log\ln(n)\rceil}\\
<\frac{p_{n+1,k}}{p_{n,k}}<\\
1+\frac{(\lceil\log(n)\rceil +\log\ln(n))^2}{n}=\exp\left(O\left(\frac{(\log(n))^2}{n}\right)\right)~.
\end{multline*}
This shows that $\varepsilon(n)=O\left(\frac{(\log(n))^2}{n}\right)$.
\end{proof}

\subsection{Main result for MaxGeo Counter (Theorem~\ref{maxGeoTheorem})}
\label{proof3}

\begin{proof}
We have $n$ increment requests, which influence the value of MaxGeo Counter $M$. Then the result of the mechanism is $X = \max(X_1,\ldots,X_n)$, where $X_i \sim \mathrm{Geo}({1}/{2})$ are pairwise independent. First, we observe that if $n=m$, the counter trivially satisfies differential privacy, as the probability distribution of $X$ does not change. From now on, we assume that $|n-m| = 1$. See that
$$
\PR{X \ls l} = \prod_{i=1}^n \PR{X_i \ls l} = \left(\PR{X_1 \ls l}\right)^n = \left(1-\frac{1}{2^l}\right)^n = \left(\frac{2^l - 1}{2^l}\right)^n~.
$$
Furthermore
\begin{align*}
\PR{\max(X_1,\ldots,X_n) = l} = \PR{X = l} = \PR{X \ls l} - \PR{X \ls (l-1)}\\
=\left(\frac{2^l - 1}{2^l}\right)^n - \left(\frac{2^{l-1} - 1}{2^{l-1}}\right)^n = \frac{\left(2^l - 1\right)^n - \left(2^l - 2\right)^n}{2^{l\cdot n}}~.
\end{align*}
Now we need to calculate the following expression
%Comm #21
\begin{align*}
\frac{\PR{\max(X_1,...X_n) = l}}{\PR{\max(X_1,...X_n,X_{n+1}) = l}} &= \dfrac{\dfrac{\left(2^l - 1\right)^n - \left(2^l - 2\right)^n}{2^{l\cdot n}}}{\dfrac{\left(2^l - 1\right)^{n+1} - \left(2^l - 2\right)^{n+1}}{2^{l\cdot (n+1)}}} \\
&=\frac{\left(\left(2^l - 1\right)^n - (2^l - 2)^n\right) \cdot 2^l}{\left(2^l - 1\right)^{n+1} - (2^l - 2)^{n+1}} \\
&= \frac{2^l}{2^l - 1} \cdot \frac{\left(\left(2^l - 1\right)^n - (2^l - 2)^n\right)}{\left(\left(2^l - 1\right)^{n} - \frac{(2^l - 2)^{n+1}}{2^l-1}\right)} \\
&\ls \frac{2^l}{2^l - 1} \cdot \frac{\left(\left(2^l - 1\right)^n - (2^l - 2)^n\right)}{\left(\left(2^l - 1\right)^{n} - \frac{(2^l - 2)^{n+1}}{2^l-2}\right)} \\
&= \frac{2^l}{2^l - 1} = 1 + \frac{1}{2^l-1}~.
\end{align*}
For fixed $\varepsilon$ we need to satisfy the following inequality
$$
\left| \ln\left(\frac{P(\max(X_1,...X_n) = l)}{P(\max(X_1,...X_n,X_{n+1}) = l)}\right) \right| \ls \varepsilon~,
$$
which gives
\begin{equation}\label{eq:logepsilon}
\ln\left(1 + \frac{1}{2^l - 1}\right) %\ls \frac{1}{2^l - 1}
 \ls \varepsilon~.
\end{equation}
We can see from (\ref{eq:logepsilon}) that the greater $l$ is, the smaller $\varepsilon$ can be. Moreover, inequality (\ref{eq:logepsilon}) is true for $l \gs l_{\varepsilon}$. Therefore, we must ensure $P(X \ls l_{\varepsilon}) \ls \delta$. See that 
$$
\PR{X \leqslant l_{\varepsilon}} = \left( 1 - 2^{-l_{\varepsilon}} \right)^n ~.
$$
It is easy to see that the above decreases as $n$ increases. Then
$$
\left( 1 - 2^{-l_{\varepsilon}} \right)^n \ls \delta \iff n \gs \frac{\ln(\delta)}{\ln(1-2^{-l_{\varepsilon}})}\approx -\frac{\ln(\delta)}{\varepsilon},
$$
where the approximation is the result of the substitution of $l_{\varepsilon}$ without ceiling.
\end{proof}

\section{Conclusions and Future Work}\label{sect:conclusion}

In this paper, we have investigated probabilistic counters from a privacy-protection perspective. We have shown that Morris Counter and MaxGeo Counter inherently guarantee differential privacy from the mechanism itself, provided that there is at least a small fixed number of increment requests. Otherwise, the counter has too low a value and, intuitively, the result is not randomised enough. We have also shown that the constant in our Morris Counter result cannot be improved further.

We have shown how to perform data aggregation, namely a distributed survey, in a privacy-preserving manner using probabilistic counters.
We clarified that this type of solution is especially efficient when one cares about memory resources, like in many Big Data related problems.
Note that the security model in this paper was somewhat optimistic. Unfortunately, in such a setting, there is little incentive to use them other than when we already have them deployed and working as aggregators due to e.g., memory-efficiency requirements. However, this would change tremendously if we weakened these assumptions. This seems a very promising way to continue our research from this paper. Namely, we focused on privacy and can still not weaken the security assumptions and allow the Adversary to extract information from channels between users and the aggregator. That would put us in the so-called Local Model, where each user is responsible for the data randomisation. However, such an approach requires us to be able to perform probabilistic counter in an oblivious manner, which, to the best of our knowledge, was not explored before. 

In Subsection \ref{ssec:morris}, we have mentioned the generalisation of the Morris Counter (for bases $a>1$). Analysis of privacy properties of such variants of Morris Counters and various probabilistic counters presented, for example, in \cite{2009arXiv0904.3062C}, \cite{fuchs:hal-01197238} may also be promising directions of further research. 

%Comm #24

In this paper, we focus on the standard definition of differential privacy. However, there is also an issue of preservation of differential privacy for requests given by a group of $k$ users   or one individual sending up to $k$ dependent  requests over time. This  can be described in the language of the so-called $(\varepsilon,\delta)-k$-DP (see the "group privacy scenario"  in \cite{DworkAlgo}) . A group of people may tend to behave in the same manner, so they may send $k$ requests in a row. Especially this ''group'' may be represented by a single person colluding with the Adversary. It is worth mentioning that this type of generalisation creates an opportunity to modify probabilistic counters so that each incrementation request executes the update request multiple times to reduce the variance of the rescaled estimator. Intuitively, this extension should be especially efficient in preserving the standard differential privacy property when $\varepsilon(n)=\frac{c}{n} + o(n^{-1})$ (as a parameter of standard differential privacy), because both $c$ and $n$ should scale with $k$ linearly. Hence, the next challenging problem is to show that Morris and MaxGeo Counters satisfy the $k$-DP property with similar privacy parameters.
\corr{It can also be one individual sending many requests over time, which I think is more applicable to the real world applications. Also this is generally called "group privacy"}{
}{}
% {\no}

The Morris Counter and the MaxGeo Counter are considered the most popular probabilistic counters. However, the results of this paper shed new light on the properties of the probabilistic counter, in general. There is a possibility to provide analogous differential privacy properties for other probabilistic counters. Moreover, this paper enables the provision of differentially private algorithms for other applications, especially those based on Morris or MaxGeo Counter.
For example, in Section \ref{sect:counters} we mentioned PCSA and HyperLogLog Counters together with their variances, which can be manually adjusted to the applications. The proper choice of the $m$ parameter implies an exchange of memory usage to improve the accuracy of the estimation. We have mentioned that these counters' differential privacy parameters can be obtained via Observation \ref{obs:maxDP}. However, such a direct result may not be satisfying. Hence a more precise calculation is needed. For example, Observation \ref{obs:maxDP} may be used again with some concentration inequalities.

\acknowledgements
\label{sec:ack}
The authors sincerely thank the reviewers and editors for their significant time and effort invested in editing and improving this work.
This work was supported by the Polish National Science Center grant number UMO-2018/29/B/ST6/02969.

\bibliographystyle{abbrvnat}
% use the following instead if you encounter problems 
%\bibliographystyle{alpha}
\bibliography{new-dmtcs}
\label{sec:biblio}
\appendix
\section{Technical Lemmas and Proofs Related to Differential Privacy of Morris Counter}
\label{append}
For the sake of completeness, we present here proofs of all technical lemmas that are not directly connected to Theorem \ref{thm:main}.
% Comm ##34
Some of computations are supported by Wolfram Mathematica ver.11.3 (\cite{Mathematica}). Whenever we obtain a result in this manner, we indicate it by $\WM$ sign. Usually results are precise, however in some cases, final forms are attained numerically.

We often struggle with expressions of a pattern $1-{1}/{y}$, so let us denote this function as $a(y)$ to abbreviate formulas.

Next two lemmas will be useful in a proof of Lemma \ref{lem:main}.
\begin{lemma}
\label{lem:anal1}
Let $c>{1}/{x}$. Then
$$
a(2cx)^{2y}\gs a(cx)^{y-1}\left(a(cx)+\frac{y}{4c^2x^2}\right)
$$
and
$$
a(cx)^y\gs a(2cx)^{2y-2}\left(a(2cx)^2-\frac{y}{4c^2x^2}\right)~.
$$
\end{lemma}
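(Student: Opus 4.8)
The plan is to reduce both inequalities to the classical Bernoulli inequality $(1+t)^y\gs 1+yt$ (valid for real $y\gs 1$ and $t\gs -1$, the regime relevant to the applications) by means of a single algebraic observation. Writing $u=cx$, so that $a(cx)=1-\tfrac1u$ and $a(2cx)=1-\tfrac1{2u}$, one computes
\begin{equation*}
a(2cx)^2=\left(1-\frac1{2u}\right)^2=1-\frac1u+\frac1{4u^2}=a(cx)+\frac1{4c^2x^2}.
\end{equation*}
Thus, abbreviating $A:=a(cx)$ and $\epsilon:=\tfrac1{4c^2x^2}$, I would record the clean relations $a(2cx)^2=A+\epsilon$ and $a(2cx)^{2y}=(A+\epsilon)^y$. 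The hypothesis $c>\tfrac1x$ forces $u=cx>1$, hence $A=1-\tfrac1u\in(0,1)$ and $\epsilon>0$; these positivity facts are exactly what will license the divisions and the applicability of Bernoulli below.

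For the first inequality, the right-hand side is $A^{y-1}(A+y\epsilon)$, so the claim reads $(A+\epsilon)^y\gs A^{y-1}(A+y\epsilon)$. Dividing through by $A^y>0$ and using $A+\epsilon=A(1+\epsilon/A)$ turns this into
\begin{equation*}
\left(1+\frac{\epsilon}{A}\right)^y\gs 1+y\,\frac{\epsilon}{A},
\end{equation*}
which is Bernoulli's inequality with $t=\epsilon/A>0$ and $y\gs 1$.

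For the second inequality, I would instead factor through $B:=a(2cx)^2=A+\epsilon$. Since $a(2cx)^2-\tfrac{y}{4c^2x^2}=B-y\epsilon$ and $a(2cx)^{2y-2}=B^{y-1}$, the claim is $A^y\gs B^{y-1}(B-y\epsilon)$. Dividing by $B^y>0$ and writing $A=B-\epsilon=B(1-\epsilon/B)$ yields
\begin{equation*}
\left(1-\frac{\epsilon}{B}\right)^y\gs 1-y\,\frac{\epsilon}{B},
\end{equation*}
again Bernoulli, now with $t=-\epsilon/B$; here one needs $t\gs -1$, i.e. $\epsilon\ls B$, which holds because $B-\epsilon=A\gs 0$.

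The only real content is spotting the identity $a(2cx)^2=a(cx)+\tfrac1{4c^2x^2}$, after which both statements collapse to two applications of Bernoulli's inequality in opposite directions. The step I expect to require the most care is the exponent bookkeeping, namely checking that the factors $A^{y-1}$ and $B^{y-1}$ recombine correctly with the linear corrections $A+y\epsilon$ and $B-y\epsilon$, and confirming that the hypothesis $c>1/x$ supplies precisely the two positivity facts ($A>0$ for the first division and $\epsilon\ls B$ for the second) that Bernoulli needs; everything else is routine.
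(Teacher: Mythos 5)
Your proof is correct, and it finishes by a different tool than the paper, although both arguments pivot on the same key identity $a(2cx)^2=a(cx)+\frac{1}{4c^2x^2}$. The paper never divides through or invokes Bernoulli: it factors the difference of powers,
$$
a(2cx)^{2y}-a(cx)^y=\frac{1}{4c^2x^2}\sum_{i=0}^{y-1}a(2cx)^{2i}\,a(cx)^{y-1-i}~,
$$
and then bounds every one of the $y$ summands between the smallest term $a(cx)^{y-1}$ and the largest term $a(2cx)^{2(y-1)}$ (legitimate because $0<a(cx)<a(2cx)^2$ when $cx>1$), which yields both claimed inequalities simultaneously from this single identity. Your route instead normalizes by $A^y$ resp.\ $B^y$ and applies Bernoulli's inequality twice, once with $t=\epsilon/A>0$ and once with $t=-\epsilon/B\gs -1$. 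What your version buys is generality and brevity: it is valid for all real $y\gs 1$, whereas the paper's telescoping sum only makes sense for $y\in\NN$ --- which is, however, all that is needed, since the lemma is invoked in Lemma~\ref{lem:main} with exponents such as $x+1$, $x+2$, $2x+1$, $2x+2$ for $x=2^k$. What the paper's version buys is self-containedness: for integer $y$ both bounds drop out of one elementary algebraic manipulation, with no appeal to the real-exponent Bernoulli inequality (which itself needs a convexity argument; for integer exponents Bernoulli is essentially this same telescoping identity in disguise). Your positivity bookkeeping ($A>0$ from $cx>1$ for the first division, and $\epsilon\ls B$ from $B-\epsilon=A>0$ for the second) is exactly right, so there is no gap.
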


\begin{proof}

\[
\frac{a(2cx)^{2y}-a(cx)^y}{a(2cx)^2-a(cx)}=\sum_{i=0}^{y-1}a(2cx)^{2i} a(cx)^{y-i-1}~.
\]
Realize that the above denominator is $\left(1-\frac{1}{cx}+\frac{1}{4c^2x^2}-1+\frac{1}{cx}\right)=\frac{1}{4c^2x^2}$.
Hence, we obtained two inequalities: $a(2cx)^{2y}-a(cx)^y\gs \dfrac{y}{4c^2x^2} a(cx)^{y-1}$\\
and~$a(2cx)^{2y}-~a(cx)^y\ls~\dfrac{y}{4c^2x^2}~a(2cx)^{2(y-1)}$, which imply the thesis of this Lemma.
\end{proof}

\begin{lemma}
\label{lem:anal2}
Let $s\ls\log({x}/{4})$. Then $a(2^{-s}x)^{2x+1}<\exp(-2^{s+1})$ and 
$$
a(2^{-s}x)^{x-1}> \exp(-2^s) \left(1-\frac{2^{2 s-1}-2^s}{x}-\frac{2^{2s-7}+2^{4 s-3}}{x^2}\right)~.
$$
\end{lemma}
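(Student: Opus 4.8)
The plan is to reduce both inequalities to estimates on a single logarithm. Writing $u := 2^{s}/x$, observe that $a(2^{-s}x) = 1 - 2^{s}/x = 1-u$, that the hypothesis $s \ls \log(x/4)$ is exactly $u \ls 1/4$, and that $xu = 2^{s}$. Thus the two target statements are just lower/upper bounds on $(1-u)^{2x+1}$ and $(1-u)^{x-1}$, both of which I would attack by taking logarithms and expanding $\ln(1-u) = -\sum_{k\gs 1} u^{k}/k$.

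For the first inequality I would simply use $\ln(1-u) < -u$ (strict for $0<u<1$). Multiplying by $2x+1>0$ gives $(2x+1)\ln(1-u) < -(2x+1)u < -2xu = -2^{s+1}$, and exponentiating yields $a(2^{-s}x)^{2x+1} < \exp(-2^{s+1})$. This is the easy half and uses only $0<u<1$.

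For the second inequality, set $t := (x-1)\ln(1-u) + 2^{s}$, so that $a(2^{-s}x)^{x-1} = \exp(-2^{s})\exp(t)$; it then suffices to prove $\exp(t) > P$ where $P := 1 - (2^{2s-1}-2^{s})/x - (2^{2s-7}+2^{4s-3})/x^{2}$. Substituting $xu = 2^{s}$ into the series and grouping by powers of $1/x$, the $k=1$ term together with the $1/x$ part of the $k=2$ term reproduce exactly the coefficient $-(2^{2s-1}-2^{s})/x$, leaving
$$
t = -\frac{2^{2s-1}-2^{s}}{x} + \frac{2^{2s-1}}{x^{2}} - (x-1)\sum_{k\gs 3}\frac{u^{k}}{k}.
$$
Using the elementary lower bound $\exp(t) \gs 1+t$, it remains to check $1+t \gs P$, i.e. that the slack $(2^{2s-7}+2^{4s-3})/x^{2}$ deliberately subtracted in $P$ absorbs the logarithmic tail. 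For this I would estimate $\sum_{k\gs 3} u^{k}/k \ls \tfrac13\,u^{3}/(1-u) \ls \tfrac{4}{9}u^{3}$ (from $1/k \ls 1/3$ and $u \ls 1/4$), so that $(x-1)\sum_{k\gs 3} u^{k}/k < \tfrac{4}{9}\,2^{3s}/x^{2}$, reducing the whole matter to the numerical inequality $\tfrac{4}{9}2^{3s} < 2^{2s-1}+2^{2s-7}+2^{4s-3}$.

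The main obstacle is the bookkeeping in this second part: isolating the exact $1/x$ coefficient of $t$ so that it cancels against $P$, and then verifying that the explicit second-order slack dominates the tail. The appearance of the $2^{4s-3}/x^{2}$ term signals that the crude bound $\exp(t)\gs 1+t$ is being compensated by an inflated correction in $P$, so the delicate point is ensuring that the tail estimate and the closing inequality $\tfrac{4}{9}2^{3s} < 2^{2s-1}+2^{2s-7}+2^{4s-3}$ hold \emph{uniformly} over the admissible range $s \ls \log(x/4)$ (equivalently, that the difference, after dividing by $2^{2s}$, stays positive as a function of $s$), rather than merely for large $s$.
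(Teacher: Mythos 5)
Your proposal is correct, and it takes a genuinely different route from the paper. The paper argues via monotonicity in $x$: using $z\ln z\geqslant z-1$ it shows that $f_1(x;s)=a(2^{-s}x)^{2x+1}$ has positive $x$-derivative, hence increases to its limit $\exp(-2^{s+1})$; and it shows that $f_2(x;s)=a(2^{-s}x)^{x-1}/D(x;s)$, where $D(x;s)$ is the correction factor in the statement, has negative $x$-derivative --- a computation carried out with Mathematica assistance and closed by a discriminant argument for a quadratic in $z=2^s$ --- hence decreases to its limit $\exp(-2^s)$, which is exactly the claimed bound. You instead work directly with the logarithmic series: $\ln(1-u)<-u$ settles the first inequality in one line, and for the second your bookkeeping is exact: with $u=2^s/x$ one indeed has $t=-\frac{2^{2s-1}-2^s}{x}+\frac{2^{2s-1}}{x^2}-(x-1)\sum_{k\geqslant 3}u^k/k$, and $e^t\geqslant 1+t$ together with $\sum_{k\geqslant 3}u^k/k\leqslant\frac{4}{9}u^3$ (valid for $u\leqslant\frac{1}{4}$) reduces the claim to $\frac{4}{9}2^{3s}<2^{2s-1}+2^{2s-7}+2^{4s-3}$. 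The uniformity over $s$ that you flag as the remaining delicate point does hold, and easily: by AM--GM, $2^{2s-1}+2^{4s-3}\geqslant 2\sqrt{2^{6s-4}}=2^{3s-1}>\frac{4}{9}2^{3s}$, so the closing inequality is true for every real $s$ (the middle term $2^{2s-7}$ is not even needed); equivalently, the quadratic $\frac{z^2}{8}-\frac{4}{9}z+\frac{1}{2}+\frac{1}{128}$ in $z=2^s$ has negative discriminant. As for what each approach buys: yours is fully elementary and hand-checkable, avoiding the computer-algebra steps on which the paper's derivative computations rely; the paper's formulation, by exhibiting $\exp(-2^{s+1})$ and $\exp(-2^s)D(x;s)$ as monotone limits of the quantities being bounded, makes the asymptotic tightness of the estimates visible and is stylistically consistent with the other Mathematica-assisted lemmas in the appendix.
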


\begin{proof}
Let
$
f_1(x;s):=a(2^{-s}x)^{2x+1}$. For any $s\ls \log(x/4)$, we have $f_1(x;s)=\left(1-\frac{2^{2+1}}{2x}\right)^{2x=1}= \exp(-2^{s+1})\left(1 - O\left(x^{-1}\right)\right).
%-\frac{2^s \left(2^s+1\right)}{x}+\frac{2^{3 s-1} \left(3\cdot 2^s+2\right)}{3 x^2}\right)+O\left(x^{-3}\right)~.
$
Realize a fact, that $z\ln(z)\gs z-1$, for $0<z\ls 1$. Hence
%Comm #26

$$
\left(1-\frac{2^s}{x}\right)^{-2 x}\dfrac{\partial f_1(x;s)}{\partial x}= \frac{2^s (2 x+1)}{x^2}+2\left(1-\frac{2^s}{x}\right) \ln
   \left(1-\frac{2^s}{x}\right)>\frac{2^{s}}{x^2}>0
$$

and in a consequence $a(2^{-s}x)^{2x+1}<\exp(-2^{s+1})$ for any reasonable $s$.

Moreover, let us introduce
$$
D(x;s):=1-\frac{2^{2 s-1}-2^s}{x}-\frac{2^{2s-7}+2^{4 s-3}}{x^2}~.
$$
It is defined in such the way that $a(2^{-s}x)^{x-1}=D(x;s)+O(x^{-2})$.
Therefore we can attain:
%Comm#28

$$
f_2(x;s):=\dfrac{a(2^{-s}x)^{x-1}}{D(x;s)}= \exp(-2^s)\left(1+O\left(x^{-2}\right)\right)~.
%\frac{2^{2 s-7} \left(-5\cdot 2^{s+6}+3\cdot 2^{2 s+5}+387\right)}{3 x^2}\right)+ O\left(x^{-3}\right)
$$
Then, in a similar way
\begin{align*}
&D(x;s)^2 \left(1-\frac{2^s}{x}\right)^{-x+1} \dfrac{\partial f_2(x;s)}{\partial x}= \\
&D(x;s)\left(\frac{2^s (x-1)}{x^2\left(1-\frac{2^s}{x}\right)}+\ln\left(1-\frac{2^s}{x}\right)\right)-\left(\frac{2^{2 s-6}+2^{4s-2}}{x^3}+\frac{2^{2s-1}-2^s}{x^2}\right)\\
%&<\max\left\{D(x;s),0\right\} \left(\frac{2^{2s-1}-2^s}{x^2} + \frac{2^{3s}-2^{2s}}{x^3\left(1-\frac{2^s}{x}\right)}- \frac{2^{3s}}{3x^3}\right)\\
%&-\left(\frac{2^{2 s-6}+2^{4s-2}}{x^3}+\frac{2^{2s-1}-2^s}{x^2}\right)\\
&<\left(\frac{2^{2s-1}-2^s}{x^2} + \frac{2^{3s}-2^{2s}}{x^3\left(1-\frac{2^s}{x}\right)}- \frac{2^{3s}}{3x^3}\right)	-\left(\frac{2^{2 s-6}+2^{4s-2}}{x^3}+\frac{2^{2s-1}-2^s}{x^2}\right)\\
&=\frac{2^{3s}-2^{2s}}{x^3\left(1-\frac{2^s}{x}\right)}- \frac{2^{3s}}{3x^3}-\frac{2^{2 s-6}+2^{4s-2}}{x^3}~.
\end{align*}
Let $d:=1-{2^s}/{x}$ and realize that $d\in[{3}/{4},1)$ and $2^s-1+d\left(-{2^s}/{3}-2^{-6}-2^{2s-2}\right)>0$. Indeed, if we put $z=2^s$, then we attain a quadratic inequality in $z$ variable, with determinant $\Delta=1-\frac{5d}{3}+\frac{55d^2}{576}$, that is negative for $d\in[{3}/{4},1)$.

Hence $\dfrac{\partial f_2(x;s)}{\partial x}<0$ and consequently
$$
a(2^{-s}x)^{x-1}> \exp(-2^s) \left(1-\frac{2^{2 s-1}-2^s}{x}-\frac{2^{2s-7}+2^{4 s-3}}{x^2}\right)
$$
for any reasonable $s$.
\end{proof}

%\paragraph*{The proof of Lemma \ref{lem:main}}
%Comm ##30
\begin{lemma}
\label{lem:main}
\begin{enumerate}[a)]
\item The sequence $(p_{2^k +1,k+4})_{k=2}^{\infty}$ is descending.
\item The sequence $(p_{2^k +1,k+5})_{k=3}^{\infty}$ is ascending.
\end{enumerate}
\end{lemma}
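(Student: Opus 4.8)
The plan is to work directly from Flajolet's explicit formula (Theorem~\ref{flajoletThm}) together with the two auxiliary lemmas established just above, which are tailored precisely for this computation. Writing $x=2^k$ and $n=2^k+1$, and observing that for $l=k+c$ one has $1-2^{-(l-j)}=a(2^{c-j}x)$, the formula becomes
\begin{equation*}
p_{2^k+1,\,k+c}=\sum_{j=0}^{k+c-1}(-1)^j\,2^{-j(j-1)/2}\,r_j\,r_{k+c-1-j}\,a(2^{c-j}x)^{x+1}.
\end{equation*}
The purpose of this reparametrisation is that passing from index $k$ to $k+1$ sends $x\mapsto 2x$ and the exponent $x+1\mapsto 2x+1$, so each summand's analytic factor changes from $a(2^{c-j}x)^{x+1}$ to $a\bigl(2\cdot 2^{c-j}x\bigr)^{2x+1}$ --- exactly the pair of powers compared in the first auxiliary lemma (with $c'=2^{c-j}$ and $y\approx x$) and bounded absolutely in the second (with $s=j-c$, so that the surviving power is $\exp(-2^{\,j-c})$). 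I would therefore analyse the $k$-th and $(k+1)$-th terms of each sequence term-by-term in $j$.

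First I would record the direction of change of each factor. The first auxiliary lemma shows that $a(2c'x)^{2y}-a(c'x)^y$ is positive but only of order $y/(c'^2x^2)=O(2^{-k})$, so every exponential factor increases very slightly, while the combinatorial factor picks up $r_{k+c-j}/r_{k+c-1-j}=(1-2^{-(k+c-j)})^{-1}>1$. Hence $p_{2^{k+1}+1,\,k+1+c}-p_{2^k+1,\,k+c}$ is an alternating sum of individually $O(2^{-k})$ quantities, and the sequence converges as $k\to\infty$ to $R\sum_{j\ge 0}(-1)^j2^{-j(j-1)/2}r_j\exp(-2^{\,j-c})$; parts a) and b) assert that this limit is approached monotonically from above for $c=4$ and from below for $c=5$.

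The genuinely hard part is that both $p_{2^k+1,\,k+c}$ and the difference of consecutive terms are alternating sums whose individually $O(1)$ (respectively $O(2^{-k})$) summands cancel almost completely, the net value being several orders of magnitude smaller than the largest summand. Crude term-wise bounding therefore cannot decide the sign, which is exactly why the two auxiliary lemmas are stated two-sidedly and carry explicit $1/x$ and $1/x^2$ corrections. My strategy would be to extract the leading $1/x=2^{-k}$ coefficient of the difference --- using the first auxiliary lemma to trap the $a(2c'x)^{2y}$-versus-$a(c'x)^y$ gap between its stated lower and upper envelopes, and the second auxiliary lemma to replace the surviving powers by the closed-form values $\exp(-2^{\,j-c})$ --- and then to show that this leading coefficient is itself a convergent (still alternating, but now constant) series whose sign is negative for $c=4$ and positive for $c=5$. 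The super-exponential decay of $2^{-j(j-1)/2}r_j\exp(-2^{\,j-c})$ in $j$ makes truncating the tail routine, so the determination reduces to a finite, explicit sign check. The main obstacle is controlling the remainder \emph{uniformly in $k$} so that it cannot overturn this sign; this is most delicate at the smallest admissible indices ($k=2$ in a) and $k=3$ in b)), where $x$ is still small, and there I would either invoke the second auxiliary lemma's second-order term or simply verify the first few comparisons directly, in the spirit of the numerical checks used elsewhere in the paper.
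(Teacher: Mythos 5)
Your plan reproduces the paper's own proof in all essentials: the paper likewise expands the difference $p_{2^{k+1}+1,k+5+t}-p_{2^k+1,k+4+t}$ via Theorem~\ref{flajoletThm} with $x=2^k$, uses the two auxiliary lemmas to trap each factor $a(\cdot)^{\cdot}$ between explicit envelopes with $1/x$ and $1/x^2$ corrections, determines the sign of the resulting leading coefficient by a finite Mathematica-assisted evaluation of the first few terms, bounds the super-exponentially decaying tail, and settles the small cases ($k=2,\dots,14$ for part a)) by direct numerical verification, exactly as you propose. The only difference is bookkeeping: the paper groups the terms $j=2i,\,2i+1$ into pairs $W_t(i)$ and proves $W_t(i)>0$ for $i\gs 1$ so that the $k$-dependent factors $r_{k+t+4-2i}$ can be pulled out by monotonicity, whereas you absorb that $k$-dependence into the uniformly controlled remainder --- both are valid implementations of the same argument.
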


\begin{proof}
Let $x=2^k$ and $t\in\{0,1\}$.
In advance we define 
$$
\kappa(k,t):=(-1)^{k+4+t} 2^{-\frac{(k+4+t)(k+3+t)}{2}} r_{k+4+t} 2^{-2x-1}
$$ 
and
$$
\tau(k,t):=[\![2\!\!\not|(k+t)]\!] (-1)^{k+t+3} 2^{-\frac{(k+t+3)(k+t+2)}{2}} 2r_{k+t+3} \left(\left(\frac{3}{4}\right)^{2x+1}\hspace{-4mm}-\left(\frac{1}{2}\right)^{x+2}\right)~,
$$
where $[\![\mathrm{cond}]\!]$ is the Iverson bracket of the condition $\mathrm{cond}$.

Realize that for $t\in\{0,1\}$ and $k\gs 5$, $|\tau(k,t)+\kappa(k,t)|<2^{-50}< 10^{-15}$.
Now, consider the differences between the consecutive elements of sequences:
\begin{align*}
&p_{2^{k+1} +1,k+5+t}-p_{2^k +1,k+4+t}\stackrel{\mathrm{Thm}~\ref{flajoletThm}}{=}\kappa(k,t)+\\
+&\sum_{i=0}^{k+3+t} (-1)^i 2^{-\frac{i(i-1)}{2}} r_i r_{k+t+4-i} \left[\left(1-\frac{2^{-5-t+i}}{x}\right)^{2x+1}-\left(1-\frac{2^{-4-t+i}}{x}\right)^{x+2}\right]\\
=&\sum_{i=0}^{\left\lfloor\frac{k+2+t}{2}\right\rfloor}\left\{ 2^{-i(2i-1)} r_{2i} r_{k+t+4-2i} \left[a(2^{5+t-2i}x)^{2x+1}-a(2^{4+t-2i}x)^{x+2}\right]\right.\\
-&\left.2^{-(2i+1)i} r_{2i+1} r_{k+t+3-2i} \left[a(2^{4+t-2i}x)^{2x+1} - a(2^{3+t-2i}x)^{x+2}\right]\right\}
+(\tau+\kappa)(k,t)\\
=&\hspace{-3mm}\sum_{i=0}^{\left\lfloor\frac{k+t+2}{2}\right\rfloor}\hspace{-2mm} 2^{-i(2i-1)} r_{2i+1} r_{k+t+4-2i} \left[ a(2^{2i+1}) \left(a(2^{5+t-2i}x)^{2x+1}-a(2^{4+t-2i}x)^{x+2}\right)\right. \\
-& \left. 2^{-2i} a(2^{4+t-2i}x) \left(a(2^{4+t-2i}x)^{2x+1} - a(2^{3+t-2i}x)^{x+2}\right)\right]
+\tau(k,t)+\kappa(k,t)~.
\end{align*}
Let us define $u_t:=2^{5+t-2i}$ (note that $u_t$ depends on $i$, but we abbreviate the notation for conciseness) and
\begin{align*}
W_t(i)&:=a(2^{2i+1}) \left(a(u_t x)^{2x+1}-a\left(\frac{u_t}{2} x\right)^{x+2}\right) \\
&- 2^{-2i} a\left(\frac{u_t}{2} x\right) \left(a\left(\frac{u_t}{2} x\right)^{2x+1} - a\left(\frac{u_t}{4} x\right)^{x+2}\right)
\end{align*}
%%%%
%\begin{align*}
%W_0(i)\ls& a(2^{2i+1}) \left(a(2^{5-2i}x)^{2x+1}-a(2^{5-2i}x)^{2x+2}\left(a(2^{5-2i}x)^2-\frac{x+2}{2^{10-4i}x^2}\right)\right) \\
%-& 2^{-2i} \left(a(2^{3-2i}x)^{x}\left(a(2^{3-2i}x)+\frac{x+1}{2^{8-4i}x^2}\right) - a(2^{4-2i}x) a(2^{3-2i}x)^{x+2}\right)\\
%=&a(2^{2i+1}) a(2^{5-2i}x)^{2x+1}\left(1-a(2^{5-2i}x)\left(a(2^{5-2i}x)^2-\frac{x+2}{2^{10-4i}x^2}\right)\right) \\
%-& 2^{-2i} a(2^{3-2i}x)^{x}\left(a(2^{3-2i}x)+\frac{x+1}{2^{8-4i}x^2}- a(2^{4-2i}x) a(2^{3-2i}x)^{2}\right)\\
%\WM& a(2^{2i+1}) a(2^{5-2i}x)^{2x+1}\frac{1}{x}\left(\frac{3u+1}{u^2}-\frac{u+1}{xu^3}-\frac{1}{x^2u^3}\right) \\
%-& 2^{-2i} a(2^{3-2i}x)^{x}\frac{1}{x}\left(\frac{6u+4}{u^2}-\frac{28}{xu^2}+\frac{32}{x^2u^3}\right)~.
%\end{align*}
and consider an upper bound of the last term:
\begin{align*}
W_t(i) &\stackrel{\mathrm{Lem.} \ref{lem:anal1}}{\ls} a(2^{2i+1}) \left(a(u_t x)^{2x+1}-a(u_t x)^{2x+2}\left(a(u_t x)^2-\frac{x+2}{u_t^2 x^2}\right)\right) \\
&- 2^{-2i} \left(a\left(\frac{u_t}{4} x\right)^{x}\left(a\left(\frac{u_t}{4} x\right)+\frac{x+1}{\frac{u_t^2}{4} x^2}\right) - a\left(\frac{u_t}{2} x\right) a\left(\frac{u_t}{4} x\right)^{x+2}\right)\\
%=&a(2^{2i+1}) a(2^{5-2i}x)^{2x+1}\left(1-a(2^{5-2i}x)\left(a(2^{5-2i}x)^2-\frac{x+2}{2^{10-4i}x^2}\right)\right) \\
%-& 2^{-2i} a(2^{3-2i}x)^{x}\left(a(2^{3-2i}x)+\frac{x+1}{2^{8-4i}x^2}- a(2^{4-2i}x) a(2^{3-2i}x)^{2}\right)\\
&\WM a(2^{2i+1}) a(u_t x)^{2x+1}\frac{1}{x}\left(\frac{3u_t +1}{u_t^2}-\frac{u_t+1}{u_t^3 x}-\frac{1}{u_t^3 x^2}\right) \\
&- 2^{-2i} a\left(\frac{u_t}{4} x\right)^{x}\frac{1}{x}\left(\frac{6u_t+4}{u_t^2}-\frac{28}{x u_t^2}+\frac{32}{x^2 u_t^3}\right)~.
\end{align*}
%Comm ##32
\mbox{ }\bigskip
Note that $2i\ls k+2+t$, so ${8}/{x}\ls u_t$ and in consequence:
\begin{itemize}
    \item[] \quad $6u_t-{28}/{x}>{20}/{x}>0$,
    \item[]
\begin{align*}
 u_t\cdot(3u_t+1)-\frac{u_t+1}{x}-\frac{1}{x^2}&\gs u_t\cdot\left(\frac{24}{x}+1\right)-\frac{u_t+1}{x}-\frac{1}{x^2}\\
&\gs \frac{8}{x} - \frac{1}{x} + \frac{23u_t}{x} -\frac{1}{x^2}\gs \frac{7}{x} +\frac{183}{x^2}>0~.
\end{align*}
\end{itemize}

Hence
\begin{align}
&W_t(i)\stackrel{\mathrm{Lem.} \ref{lem:anal2}}{<} a(2^{2i+1}) \exp\left(-\frac{2}{u_t}\right)\frac{1}{x}\left(\frac{3u_t+1}{u_t^2}-\frac{u_t+1}{x u_t^3}-\frac{1}{x^2 u_t^3}\right) \nonumber\\
&- 2^{-2i} a\left(\frac{u_t}{4} x\right) \exp\left(-\frac{4}{u_t}\right)D(x;2i-3-t)\frac{1}{x}\left(\frac{6u_t+4}{u_t^2}-\frac{28}{x u_t^2}+\frac{32}{x^2 u_t^3}\right) \nonumber\\
%\WM& a(2^{2i+1}) e^{-2^{2i-4}}\frac{1}{x}\left(\frac{3u+1}{u^2}-\frac{u+1}{xu^3}-\frac{1}{x^2u^3}\right) \\
%-& 2^{-2i} e^{-2^{2i-3}}\frac{1}{x}\left(2^{2 i-8} \left(2^{2 i}+48\right) -\frac{2^{4 i-15} \left(2^{4 i}+3\cdot 2^{2 i+4}+896\right)}{x} \right.\\
%-&\left.\frac{2^{6 i-23} \left(2^{6 i}-703\cdot 2^{2 i+2}+2^{4 i+4}+16576\right)}{x^2}+\frac{2^{8 i-26} \left(2^{6 i}-575\cdot 2^{2 i+2}+13\cdot 2^{4 i+3}+29088\right)}{x^3}\right.\\
%-&\left.\frac{2^{10 i-26} \left(9\cdot 2^{4 i}-2^{2 i+6}+1060\right)}{x^4}+\frac{2^{12 i-28} \left(2^{4 i}+4\right)}{x^5}\right)\\
&\WM a(2^{2i+1}) \exp\left(-\frac{2}{u_t}\right)\frac{1}{x}\left(\frac{3u_t+1}{u_t^2}-\frac{u_t+1}{x u_t^3}-\frac{1}{x^2 u_t^3}\right) \label{eq:utx}\\
&-\frac{2^{-2i}}{x}\! \exp\left(-\frac{4}{u_t}\right)\!\!
\left(\frac{6u_t+4}{u_t^2} -\frac{32+48u_t+28u_t^2}{u_t^4 x}-\frac{128 + 64u_t - \frac{703}{2}u_t^2 + \frac{259}{4}u_t^3}{u_t^6 x^2}\right. \nonumber\\
&+\left. \frac{512 + 128 u_t - 1150 u_t^2 + \frac{909}{2} u_t^3}{u_t^7 x^3} - \frac{4608 - 1024 u_t + 530 u_t^2}{u_t^7 x^4}+\frac{4096+16u_t^2}{u_t^8 x^5}\right) \nonumber
\end{align}
%Comm ## 33
We denote the upper bound (\ref{eq:utx}) by $U_t(x;u_t(i))$.

Analogically we would like to establish a lower bound of $W_t(i)$:
\begin{align}
W_t(i)\stackrel{\mathrm{Lem.} \ref{lem:anal1}}{\gs}& a(2^{2i+1}) \left(a(u_t x)a\left(\frac{u_t}{2} x\right)^{x-1}\left(a\left(\frac{u_t}{2} x\right)+\frac{1}{u_t^2 x}\right)-a\left(\frac{u_t}{2} x\right)^{x+2}\right) \nonumber\\
-& 2^{-2i} a\left(\frac{u_t}{2} x\right)\!\left(a\left(\frac{u_t}{2} x\right)^{2x+1}\! - a\left(\frac{u_t}{2} x\right)^{2x+2}\left(a\left(\frac{u_t}{2} x\right)^2 -\frac{x+2}{\frac{u_t^2}{4} x^2}\right)\right)\nonumber\\
%=&a(2^{2i+1}) a(2^{5-2i}x)^{x-1}\left(a(2^{6-2i}x)\left(a(2^{5-2i}x)+\frac{1}{2^{12-4i}x}\right)-a(2^{5-2i}x)^{3}\right) \\
%-& 2^{-2i} a(2^{5-2i}x)^{2x+2}\left(1 - a(2^{5-2i}x)\left(a(2^{5-2i}x)^2 -\frac{x+2}{2^{10-4i}x^2}\right)\right)\\
\WM& a(2^{2i+1}) a\left(\frac{u_t}{2} x\right)^{x-1}\frac{1}{x}\left(\frac{3u_t+1}{u_t^2}- \frac{10u_t+1}{u_t^3 x}+ \frac{8}{u_t^3 x^2}\right) \label{eq:techno}\\
-& 2^{-2i} a\left(\frac{u_t}{2} x\right)^{2x+2}\frac{1}{x}\left(\frac{6u_t+4}{u_t^2} -\frac{4u_t+8}{u_t^3 x}- \frac{8}{u_t^3 x^2}\right) \nonumber
\end{align}
Now from ${8}/{x}\ls u_t$ we attain 
\begin{align}
u_t\cdot(3u_t+1)-\frac{10u_t+1}{x}&=u_t\cdot\left(\frac{7u_t}{4}+\frac{7}{8}\right)+(10u_t+1)\left(\frac{u_t}{8}-\frac{1}{x}\right)\nonumber\\
&\gs u_t\cdot\left(\frac{7u_t}{4}+\frac{7}{8}\right)>0 \label{eq:uteki}
\end{align}
and 
$$
u_t\cdot(6u_t+4)-\frac{4u_t+8}{x}-\frac{8}{x^2}\gs \frac{48u_t}{x}+\frac{32}{x}-\frac{4u_t+8}{x}-\frac{8}{x^2}\gs \frac{24}{x} + \frac{344}{x^2}>0~.
$$
Hence
\begin{align}
&W_t(i) \stackrel{\mathrm{Lem.} \ref{lem:anal2}}{>} \! a(2^{2i+1}) \exp\left(-\frac{2}{u_t}\right)\! \frac{D(x;2i-4-t)}{x}\left(\frac{3u_t+1}{u_t^2}- \frac{10u_t+1}{u_t^3 x}+ \frac{8}{u_t^3 x^2}\right) \nonumber \\
&- 2^{-2i} \exp\left(-\frac{4}{u_t}\right) a\left(\frac{u_t}{2} x\right)\frac{1}{x}\left(\frac{6u_t+4}{u_t^2} -\frac{4u_t+8}{u_t^3 x}- \frac{8}{u_t^3 x^2}\right) \nonumber\\
&\WM \frac{a(2^{2i+1})}{x} \exp\left(-\frac{2}{u_t}\right) \left(\frac{3u_t+1}{u_t^2}-\frac{2+5u_t+4u_t^2}{u_t^4 x}-\frac{2+4u_t-\frac{575}{32} u_t^2+ \frac{387}{32}u_t^3}{u_t^6 x^2}\right. \nonumber\\
&\left.+ \frac{2+20 u_t -\frac{511}{32} u_t^2 +\frac{261}{16} u_t^3}{u_t^7 x^3}- \frac{16 + \frac{1}{4}u_t^2}{u_t^7 x^4}\right) \label{eq:ltx}\\
&- 2^{-2i} \exp\left(-\frac{4}{u_t}\right)\frac{1}{x}\left(\frac{6u_t+4}{u_t^2}-\frac{16u_t+16}{u_t^3 x}+\frac{16}{u_t^4 x^2}+\frac{16}{u_t^4 x^3}\right)~. \nonumber
\end{align}
Denote the lower bound (\ref{eq:ltx}) by $L_t(x;u_t(i))$.

Now we show that $W_t(i)>0$ for $i\gs 1$.
Indeed, after reducing the redundant terms from inequality (\ref{eq:techno}), together with inequality (\ref{eq:uteki}), we can obtain
\begin{equation}
\label{eq:lower}
W_t(i)>\frac{a\left(\frac{u_t}{2} x\right)^{x}}{xu_t^2} \left(a(2^{2i+1})\frac{14u_t+7}{8}- 2^{-2i} (6u_t+4)\right)
\end{equation}
Let us denote $C(t,x):=\frac{a\left(\frac{u_t}{2} x\right)^{x}}{xu_t^2}$. In case $i\gs 2$, we may attain $$(\ref{eq:lower})\gs \frac{C(t,x)}{256}\left(31(14u_t+7) - 96u_t-64\right)>0.$$

When $i=1$, then $u_t\gs 8$, so
$(\ref{eq:lower})\gs \frac{C(t,x)}{64}\left(7(14u_t+7)- 96u_t-64\right)=\frac{2u_t-15}{64}\gs\frac{1}{64}$.

Thanks to the property $W_t(i)>0$ for $i\gs 1$, we may subtly neutralize the influence of $r_{k+5-i}$ in the considered sum:
$$
\sum_{i=0}^{\left\lfloor\frac{k+2}{2}\right\rfloor} 2^{-i(2i-1)} r_{2i+1}r_{k+5-i} W_0(i)<r_{k+5}\sum_{i=0}^{\left\lfloor\frac{k+2}{2}\right\rfloor} 2^{-i(2i-1)} r_{2i+1} W_0(i)~.
$$
Naturally we may consider $U_0(x;u_0(i))$ instead of $W_0(i)$ numerically for $i\ls 4$:
\begin{multline*}
\sum_{i=0}^{4} 2^{-i(2i-1)} r_{2i+1} U_0(x;u_0(i))\WM -8.294491525704523\ldots\cdot 10^{-6}+\frac{0.15588\ldots}{x}\\
+\frac{0.00407163\ldots}{x^2}-\frac{0.0298032\ldots}{x^3}+\frac{0.0198815\ldots}{x^4}-\frac{0.00785419\ldots}{x^5}~,
\end{multline*}
so for $x\gs 2^{15}\;(k\gs 15)$, $\sum\limits_{i=0}^{4} 2^{-i(2i-1)} r_{2i+1} W_0(i)\ls -3.53741\cdot 10^{-6}$.
Moreover we may bound $W_0(i)$ by $a(2^{5-2i}x)^{2x+1}$ for the rest of the sum:
$$
\sum_{i=5}^{\left\lfloor\frac{k+2}{2}\right\rfloor} 2^{-i(2i-1)} r_{2i+1} a(2^{5-2i}x)^{2x+1}\ls  \frac{R\; 2^{-45}\exp(-64)}{1-2^{-21} \exp(-192)}= 1.5784\ldots\cdot 10^{-41}~,
$$
so $p_{2^{k+1} +1,k+5}-p_{2^k +1,k+4}<0$ for $k\gs 15$.\\
However, according to Theorem \ref{flajoletThm}, we also present the numerical values of the sequence $(p_{2^k +1,k+4})_{k=2}^{14}$ in the Table \ref{tab:init}. We can now easily see that for any $k\gs 2$ we attained $p_{2^{k+1} +1,k+5}-p_{2^k +1,k+4}<0$ .
\begin{table}[h!]
\centering
\begin{tabular}{|c|c||c|c||c|c|}
\hline
k & $p_{2^k +1,k+4}$ & k & $p_{2^k +1,k+4}$ & k & $p_{2^k +1,k+4}$ \\
\hline\hline
2 & $0.0000305176\ldots$ & 7 & $0.0000189841\ldots$ & 12 & $0.0000185484\ldots$ \\
\hline
3 & $0.0000256707\ldots$ & 8 & $0.0000187590\ldots$ & 13 & $0.0000185413\ldots$ \\
\hline
4 & $0.0000221583\ldots$ & 9 & $0.0000186466\ldots$ & 14 & $0.0000185378\ldots$ \\
\hline
5 & $0.0000203424\ldots$ & 10 & $0.0000185904\ldots$ & & \\ 
\hline
6 & $0.0000194356\ldots$ & 11 & $0.0000185624\ldots$ & & \\
\hline
\end{tabular}
\caption{Numerical values of the sequence $(p_{2^k +1,k+4})_{k=2}^{14}$.}
\label{tab:init}
\end{table}

Moreover, realize that ${r_{k+5}}/{r_{k+3}}<1.1$ for any $k\gs 3$, so 
\begin{align*}
\sum\limits_{i=0}^{1} 2^{-i(2i-1)} r_{2i+1} 1.1^{1-i} L_1(x;u_1(i)) \WM 0.00128843\ldots +\frac{0.00212699\ldots}{x}\\
-\frac{0.00326251\ldots}{x^2}+\frac{0.000219133}{x^3}-\frac{3.50924875\ldots\cdot10^{-7}}{x^4}
\end{align*}
For any possible $x\gs 8$ ($k\gs 3$), $\sum_{i=0}^{1} 2^{-i(2i-1)} r_{2i+1} 1.1^{1-i} L_1(x;u_t(i))> 0.0015$.
We already know that $W_1(i)$ are positive for $i>1$, so
$p_{2^{k+1} +1,k+6}-p_{2^k +1,k+5}>0$ for all $k\gs 3$.
\end{proof}

We may use Theorem \ref{flajoletThm} once again to see that $$\frac{p_{2^6+1,10}}{p_{2^6+1,11}}= 129.454\ldots>2^{7} \mbox{and},$$ $$\frac{p_{2^7+1,11}}{p_{2^7+1,12}}= 125.065\ldots<2^{7}.$$ Together with Lemma \ref{lem:main} we may easily attain Claim \ref{claim4} and we instantly see that this Claim cannot be extended continuously for $k<7$.

%\paragraph*{The proof of the Lemma \ref{lem:between}}
%Comm # 29
\begin{lemma}
\label{lem:between}
Let $2\ls l\ls n$ and assume that 
$\alpha_i=2^{i-2}\frac{p_{n,l-i}}{p_{n,l-i+1}}$ for $i\in[0:2]$ and
%$p_{n,l-1}= 2\beta p_{n,l}$, 
%$p_{n,l-2}=\gamma p_{n,l-1}$, 
${\alpha'}_j=2^{j-2} \frac{p_{n+1,l-j}}{p_{n+1,l-j+1}}$ for $j\in [0:1]$.
\\
%$p_{n+1,l-1}= 2\beta' p_{n+1,l}$. 
If $0\ls\alpha_2<\alpha_1<\alpha_0$, then $0<\alpha'_1<\alpha'_0$.
\end{lemma}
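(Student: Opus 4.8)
The plan is to reduce the desired ordering $\alpha_1'<\alpha_0'$ to a single polynomial inequality in $\alpha_0,\alpha_1,\alpha_2$ and $q:=2^{-l}$, and then verify it using only $\alpha_2<\alpha_1<\alpha_0$. First note that positivity is immediate: since $2\ls l\ls n$, all of $p_{n,l-2},\dots,p_{n,l+1}$ and $p_{n+1,l-1},p_{n+1,l},p_{n+1,l+1}$ have indices in the ranges where the Morris probabilities are strictly positive (in particular $l\ls n$ guarantees $p_{n,l+1}>0$, so every $\alpha$ is well defined). Hence $\alpha_1'=p_{n+1,l-1}/(2p_{n+1,l})>0$, and it remains to prove $\alpha_1'<\alpha_0'$. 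Unfolding the definitions, this is equivalent to $p_{n+1,l}^2 > 2\,p_{n+1,l-1}\,p_{n+1,l+1}$.

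Next I would apply the recursion (\ref{eq:base}) to rewrite $p_{n+1,l-1}$, $p_{n+1,l}$, $p_{n+1,l+1}$ in terms of $p_{n,l-2},p_{n,l-1},p_{n,l},p_{n,l+1}$, and then substitute the hypotheses $p_{n,l-i}=2^{2-i}\alpha_i p_{n,l-i+1}$ to express all three in terms of $p_{n,l}$ alone. Writing $q=2^{-l}$, this yields $p_{n+1,l}=p_{n,l}\,[1-q+4q\alpha_1]$, $p_{n+1,l-1}=2\alpha_1 p_{n,l}\,[1-2q+4q\alpha_2]$, and $p_{n+1,l+1}=p_{n,l}\,[\tfrac{1-q/2}{4\alpha_0}+q]$. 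After cancelling $p_{n,l}^2$, the target becomes a rational inequality in $\alpha_0,\alpha_1,\alpha_2,q$.

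The key step is to eliminate $\alpha_0$ using the hypothesis $\alpha_0>\alpha_1$. Since $\tfrac{1-q/2}{4\alpha_0}+q$ is decreasing in $\alpha_0$ (as $1-q/2>0$), I would bound $p_{n+1,l+1} < p_{n,l}\,[\tfrac{1-q/2}{4\alpha_1}+q]$; the factor $2\alpha_1$ in $p_{n+1,l-1}$ then cancels the $4\alpha_1$ in this denominator, so it suffices to prove the $\alpha_0$-free inequality $[1-q+4q\alpha_1]^2 \gs [1-2q+4q\alpha_2]\,[1-q/2+4q\alpha_1]$. Setting $u=4q\alpha_1$ and $v=4q\alpha_2$ and expanding, the difference of the two sides collapses to $(u-v)(u+1)+\tfrac{q}{2}(1+v)$.

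The final step is to observe that this expression is strictly positive: the hypothesis $\alpha_2<\alpha_1$ gives $v<u$ (since $q>0$), so $(u-v)(u+1)>0$, while $\alpha_2\gs 0$ gives $v\gs 0$ and hence $\tfrac{q}{2}(1+v)>0$. I expect the main obstacle to be purely computational — carrying out the substitution in the second step without arithmetic error and recognizing the clean factorization in the third; the conceptual content is simply that bounding the $\alpha_0$-term by its value at $\alpha_1$ decouples $\alpha_0$ and turns the claim into a manifestly positive polynomial.
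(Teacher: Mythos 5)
Your proof is correct. I verified the three one-step consequences of the recursion (\ref{eq:base}) — with $q=2^{-l}$ they read $p_{n+1,l}=p_{n,l}\left(1-q+4q\alpha_1\right)$, $p_{n+1,l-1}=2\alpha_1 p_{n,l}\left(1-2q+4q\alpha_2\right)$ and $p_{n+1,l+1}=p_{n,l}\left(\frac{1-q/2}{4\alpha_0}+q\right)$ — the equivalence $\alpha_1'<\alpha_0'\iff p_{n+1,l}^2>2\,p_{n+1,l-1}\,p_{n+1,l+1}$, and the key identity $(1-q+u)^2-(1-2q+v)(1-q/2+u)=(u-v)(u+1)+\tfrac{q}{2}(1+v)$, which is strictly positive under $u=4q\alpha_1>v=4q\alpha_2\gs 0$; so the argument closes. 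The paper starts from the same place — it also unfolds (\ref{eq:base}) once to get $\alpha_j'=\alpha_j\frac{1-2^{-l+j}+2^{-l+2}\alpha_{j+1}}{1-2^{-l+j-1}+2^{-l+2}\alpha_j}$ — but then argues by contradiction: it assumes $\alpha_1'\gs\alpha_0'$, cross-multiplies into an inequality $A\gs B$ between two triple products, fully expands $A$, and refutes the assumption by term-by-term estimates that invoke all three orderings $\alpha_2<\alpha_1$, $\alpha_2<\alpha_0$, $\alpha_1<\alpha_0$. Your decomposition is genuinely different and tidier: you use $\alpha_1<\alpha_0$ exactly once, to replace $1/(4\alpha_0)$ by $1/(4\alpha_1)$ and thereby remove $\alpha_0$ from the problem, after which the remaining two-variable inequality collapses into a manifestly positive expression with no expansion bookkeeping. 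That robustness is not hypothetical: the paper's final display in fact contains a sign slip — $(1-2^{-l}+2^{-l+2}\alpha_1)^2$ expands with the term $-2^{-2l+3}\alpha_1$, not $+2^{-2l+3}\alpha_1$, so the printed chain only yields $A<B+2^{-2l+3}\alpha_0\alpha_1$ (repairable by retaining the $-2^{-2l+3}\alpha_0\alpha_1$ term when bounding $A$) — whereas your factorization leaves no room for such an error. Two pedantic points on your write-up: $p_{n,l-2}=0$ when $l=2$, so not all the probabilities you list are strictly positive (harmless — you only use $\alpha_2\gs 0$ and never divide by it); and multiplying your bound on $p_{n+1,l+1}$ through by $2p_{n+1,l-1}$ implicitly uses $p_{n+1,l-1}>0$, which indeed holds since $1\ls l-1\ls n+2$.
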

\begin{proof}
%Recall that $p_{n+1,k}=p_{n,k}(1-2^{-k})+p_{n,k-1}2^{-k+1}$ for $k>1$. %% tu mozna skorzystac z refa
Realize that $p_{n+1,l-i+1}=p_{n,l-i+1}(1-2^{-l+i-1}+2^{-l+2}\alpha_i)$ for $i\in[0:2]$, so for $j\in[0:1]$,
%$p_{n+1,l}=p_{n,l}(1-2^{-l}+2^{-l+2}\beta)$ and 
%$p_{n+1,l-1}=p_{n,l-1}(1-2^{-l+1}+2^{-l+2}\gamma)$, so
\begin{align*}
\alpha'_j&=\frac{p_{n+1,l-j}}{2^{2-j}p_{n+1,l-j+1}}=\frac{p_{n,l-j}(1-2^{-l+j}+2^{-l+2}\alpha_{j+1})}{2^{2-j}p_{n,l-j+1}(1-2^{-l+j-1}+2^{-l+2}\alpha_j)}\\
&=\frac{\alpha_j(1-2^{-l+j}+2^{-l+2}\alpha_{j+1})}{1-2^{-l+j-1}+2^{-l+2}\alpha_j}~.
\end{align*}
%and
%$$
%\beta'=\frac{p_{n+1,l-1}}{2p_{n+1,l}}=\frac{p_{n,l-1}(1-2^{-l+1}+2^{-l+2}\gamma)}{2p_{n,l}(1-2^{-l}+2^{-l+2}\beta)}=\frac{\beta(1-2^{-l+1}+2^{-l+2}\gamma)}{(1-2^{-l}+2^{-l+2}\beta)}~.
%$$
Assume that $\alpha'_1\gs \alpha'_0$. Then 
$$
A:=\alpha_1(1-2^{-l+1}+2^{-l+2}\alpha_2)(1-2^{-l-1}+2^{-l+2}\alpha_0)\gs\alpha_0(1-2^{-l}+2^{-l+2}\alpha_1)^2=:B.
$$
However, contrary to the assumption,
\begin{align*}
A&=\alpha_1(1-2^{-l+1}+2^{-2l}-2^{-l-1}+2^{-l+2}(\alpha_0+\alpha_2)-2^{-2l+3}\alpha_0\\
&-2^{-2l+1}\alpha_2+2^{-2l+4}\alpha_0 \alpha_2)\\
&<\alpha_0(1-2^{-l+1}+2^{-2l})+\alpha_1(2^{-l+2}(2\alpha_0)+2^{-2l+4}\alpha_0 \alpha_1)\\
&<\alpha_0(1-2^{-l+1}+2^{-2l}+\alpha_1(2^{-l+3}+2^{-2l+3}+2^{-2l+4}\alpha_1))=B~.
\end{align*}

\end{proof}

%\paragraph*{The proof of the Lemma \ref{lem:end}}
\begin{lemma}
\label{lem:end}
If for some $n\in\NN$, $\eta_{n}=2^{-n}\frac{p_{n,n}}{p_{n,n+1}}$ and\\ $\eta_{n+1}=2^{-n-1}\frac{p_{n+1,n+1}}{p_{n+1,n+2}}$, then $\eta_n<\eta_{n+1}$.
\end{lemma}

\begin{proof}
\begin{align*}
0&=p_{n+1,n+1}-2^{n+1} \eta_{n+1} p_{n+1,n+2}=p_{n,n+1}(1-2^{-n-1})+p_{n,n}2^{-n}\\
&-\eta_{n+1} p_{n,n+1}=p_{n,n+1}(1-2^{-n-1}+\eta_n-\eta_{n+1})~,
\end{align*}
but $1-2^{-n-1}>0$, so $\eta_n<\eta_{n+1}$.
\end{proof}

\end{document}